\newtheorem{theorem}{Theorem}
\newtheorem{lemma}{Lemma}
\newtheorem{corollary}{Corollary}
\theoremstyle{definition}
\newtheorem{definition}{Definition}
\begin{document}


\title{Quantum-Centric Algorithm for Sample-Based Krylov Diagonalization}

\author{Jeffery Yu}
\email{jey@umd.edu}
\affiliation{IBM~Quantum,~IBM~T.J.~Watson~Research~Center,~Yorktown~Heights,~NY~10598,~USA}
\affiliation{Joint Center for Quantum Information and Computer Science, NIST/University of Maryland,
College Park, Maryland 20742, USA}
\affiliation{Joint Quantum Institute, NIST/University of Maryland,
College Park, Maryland 20742, USA}
\author{Javier~Robledo~Moreno}
\email{j.robledomoreno@ibm.com}
\affiliation{IBM~Quantum,~IBM~T.J.~Watson~Research~Center,~Yorktown~Heights,~NY~10598,~USA}
\author{Joseph T.~Iosue}
\affiliation{IBM~Quantum,~IBM~T.J.~Watson~Research~Center,~Yorktown~Heights,~NY~10598,~USA}
\affiliation{Joint Center for Quantum Information and Computer Science, NIST/University of Maryland,
College Park, Maryland 20742, USA}
\affiliation{Joint Quantum Institute, NIST/University of Maryland,
College Park, Maryland 20742, USA}
\author{Mirko Amico}
\affiliation{IBM~Quantum,~IBM~T.J.~Watson~Research~Center,~Yorktown~Heights,~NY~10598,~USA}
\author{Luke Bertels}
\affiliation{Quantum Information Science Section, Oak Ridge National Laboratory, Oak Ridge, TN, 37831, USA}
\author{Daniel~Claudino
}
\affiliation{Quantum Information Science Section, Oak Ridge National Laboratory, Oak Ridge, TN, 37831, USA}
\author{Bryce~Fuller}
\affiliation{IBM~Quantum,~IBM~T.J.~Watson~Research~Center,~Yorktown~Heights,~NY~10598,~USA}
\author{Peter~Groszkowski
}
\affiliation{National Center for Computational Sciences, Oak Ridge National Laboratory, Oak Ridge, TN, USA}
\author{Travis~S.~Humble
}
\affiliation{Quantum Science, Center, Oak Ridge National Laboratory, Oak Ridge, Tennessee, USA}
\author{Petar~Jurcevic}
\affiliation{IBM~Quantum,~IBM~T.J.~Watson~Research~Center,~Yorktown~Heights,~NY~10598,~USA}
\author{William Kirby}
\affiliation{IBM~Quantum,~IBM~Research~Cambridge,~Cambridge,~MA~02142,~USA}
\author{Thomas A. Maier}
\affiliation{Computational~Sciences~and~Engineering~Division,~Oak~Ridge~National~Laboratory,~Oak Ridge,~Tennessee~37831,~USA}
\author{Mario Motta}
\affiliation{IBM~Quantum,~IBM~T.J.~Watson~Research~Center,~Yorktown~Heights,~NY~10598,~USA}
\author{Bibek~Pokharel}
\affiliation{IBM~Quantum,~IBM~T.J.~Watson~Research~Center,~Yorktown~Heights,~NY~10598,~USA}
\author{Alireza Seif}
\affiliation{IBM~Quantum,~IBM~T.J.~Watson~Research~Center,~Yorktown~Heights,~NY~10598,~USA}
\author{Amir Shehata
}
\affiliation{Oak Ridge National Laboratory, Oak Ridge, Tennessee, USA}
\author{Kevin~J.~Sung}
\affiliation{IBM~Quantum,~IBM~T.J.~Watson~Research~Center,~Yorktown~Heights,~NY~10598,~USA}
\author{Minh C. Tran}
\affiliation{IBM~Quantum,~IBM~T.J.~Watson~Research~Center,~Yorktown~Heights,~NY~10598,~USA}
\author{Vinay Tripathi}
\affiliation{IBM~Quantum,~IBM~T.J.~Watson~Research~Center,~Yorktown~Heights,~NY~10598,~USA}
\author{Antonio~Mezzacapo}
\email{mezzacapo@ibm.com}
\affiliation{IBM~Quantum,~IBM~T.J.~Watson~Research~Center,~Yorktown~Heights,~NY~10598,~USA}
\author{Kunal Sharma}
\email{kunals@ibm.com}
\affiliation{IBM~Quantum,~IBM~T.J.~Watson~Research~Center,~Yorktown~Heights,~NY~10598,~USA}

\begin{abstract}
Approximating the ground state of many-body systems is a key computational bottleneck underlying important applications in physics and chemistry. 
The most widely known quantum algorithm for ground state approximation, quantum phase estimation, is out of reach of current quantum processors due to its high circuit-depths. 
Subspace-based quantum diagonalization methods offer a viable alternative for pre- and  early-fault-tolerant quantum computers.
Here, we introduce a quantum diagonalization algorithm which combines two key ideas on quantum subspaces: a classical diagonalization based on quantum samples, and subspaces constructed with quantum Krylov states.
We prove that our algorithm converges in polynomial time under the working assumptions of Krylov quantum diagonalization and sparseness of the ground state.
We then demonstrate the scalability of our approach by performing the largest ground-state quantum simulation of impurity models using a Heron quantum processors and the Frontier supercomputer. We consider both the single-impurity Anderson model with 41 bath sites, and a system with 4 impurities and 7 bath sites per impurity. Our results are in excellent agreement with Density Matrix Renormalization Group calculations. 
\end{abstract}

\maketitle

\section{Introduction}
{\let\thefootnote\relax\footnote{{This manuscript has been authored by UT-Battelle, LLC, under Contract No.~DE-AC0500OR22725 with the U.S.~Department of Energy. The United States Government retains and the publisher, by accepting the article for publication, acknowledges that the United States Government retains a non-exclusive, paid-up, irrevocable, world-wide license to publish or reproduce the published form of this manuscript, or allow others to do so, for the United States Government purposes. The Department of Energy will provide public access to these results of federally sponsored research in accordance with the DOE Public Access Plan.}}}

A significant bottleneck in physics and chemistry is the efficient estimation of the low-energy spectrum of quantum systems. Fault-tolerant quantum algorithms, including phase estimation, promise advantages over classical methods for this task, but require circuit depths beyond the reach of pre-fault-tolerant devices~\cite{kitaev1995quantum}.  
Shallow circuit approaches such as the variational quantum eigensolver~\cite{peruzzo2014variational} are more hardware-efficient, but they rely on parametric optimization and stochastic estimation of complex observables, which limits their scaling because of prohibitive runtime connected to the large number of measurements~\cite{wecker2015progress, cerezo2021variational, larocca2024review}.
This motivates the development of new quantum algorithms that can estimate the spectral properties of physical systems on current quantum computers.

Quantum diagonalization methods based on subspaces have emerged as promising algorithms for estimating spectral properties on pre-fault-tolerant devices~\cite{mcclean2017subspace,parrish2019quantum,motta2020qite_qlanczos,klymko2022realtime,Epperly_2022,shen2023realtimekrylov,yang2023dualgse,yang2023shadow,ohkura2023leveraging,kanno2023quantum,ibm2024chemistry,yoshioka2024diagonalization,motta2023subspace,oumarou2025molecular}. 
Notably, an experimental implementation of Krylov quantum diagonalization (KQD) was shown on quantum many-body systems of up to 56 spins~\cite{yoshioka2024diagonalization}. KQD involves constructing a subspace by time-evolving a reference state over various time intervals, followed by classical diagonalization of the Hamiltonian within that subspace.
A key advantage of this approach is that convergence is guaranteed when the initial state has polynomial overlap with the ground state, and it relies on simulating quantum dynamics using circuits that can be executed at sizes beyond the reach of exact classical methods~\cite{kim2023evidence,shinjo2024unveiling}.

Subspace algorithms based on individual quantum samples~\cite{kanno2023quantum,ibm2024chemistry, kaliakin2024supramolecular,barison2024ext-sqd,liepuoniute2024triplet,shajan2024SQD-DMET} approximate ground state energies by sampling from a quantum state and performing classical post-processing on noisy data and diagonalization in quantum-centric supercomputing environments~\cite{alexeev2023quantum}. Unlike KQD, these sample-based quantum diagonalizations (SQD) do not require time-evolution circuits, making them appealing for chemistry Hamiltonians with a large number of terms. These ideas have been experimentally demonstrated for molecular electronic structure up to sizes not amenable to exact diagonalization~\cite{ibm2024chemistry}. 

In this work, we introduce a new algorithm that combines key ideas from the KQD and SQD frameworks. We refer to this algorithm as \emph{sample-based Krylov quantum diagonalization}~(SKQD), which leverages the advantages of both frameworks: the convergence guarantees of Krylov methods and the noise resilience of sample-based techniques. SKQD constructs a subspace from bistrings sampled from quantum states obtained by time-evolving a reference state over multiple intervals, and then classically diagonalizes the Hamiltonian within this subsapce to  approximate the ground state. 

We prove that under the sparsity assumptions for the ground state and given a reference state with polynomial overlap, SKQD approximates the ground state energy in polynomial time. We experimentally demonstrate the scalability of SKQD by computing ground state properties of a single-impurity Anderson model (SIAM)~\cite{Anderson1961} with 41 bath sites (42 electrons in 42 orbitals), simulated using 85 qubits and up to $6 \cdot 10^3$ two-qubit gates to prepare Krylov states; and the Anderson model with four impurities and 28 bath sites (32 electrons in 32 orbitals), using 70 qubits of a Heron processor. We show excellent agreement between SQKD and Density Matrix Renormalization Group (DMRG)~\cite{white1992DMRG1, white1993DMRG, white2005DMRG} calculations, on system sizes not amenable to exact diagonalization. To the best of our knowledge, these results are the largest simulations of the ground state properties of impurity models using heterogeneous quantum and classical architectures.


\section{Sample-based Krylov Quantum Diagonalization (SKQD)}\label{msec:skqd}
We are interested in approximating the ground-state energy of a Hamiltonian $H$ defined for $n$ qubits. Let $N=2^n$. 
Let $\ket{\psi_0}$ denote an initial (reference) state. Similar to KQD (see \cref{sec:methods}), we consider the following time-evolved states, also known as Krylov states:
\begin{equation}\label{meq:krylov-basis}
\ket{\psi_k} \coloneq e^{-ik H \Delta{t}} \ket{\psi_0},
\end{equation}
where $k \in \{0, 1, \dots, d-1\}$ and $\Delta{t}$ is a chosen time step. To implement SKQD, we proceed as follows (see Fig.~\ref{fig: algorithm}):  
\begin{enumerate}
    \item Prepare a reference state $\ket{\psi_0}$.
    \item For each $k \in \{0, 1, \dots, d-1\}$, prepare $M = \mathcal{O}(\poly(n))$ copies of $\ket{\psi_k}$.
    \item Measure each $\ket{\psi_k}$ in the computational basis to obtain a sequence of bitstrings $\mathcal{X}_k = \{a_{km}~|~m=0,1,\dots,M-1\}$. 
    \item Classically project $H$ onto $\operatorname{span}(B_{d, M})$ to obtain the matrix $\widehat{\mathbf{H}}$, where $
    B_{d,M}=\{\ket{a_{km}}~|~k=0,\dots,d-1; m=0,\dots,M-1\}$. 
    \item Diagonalize $\widehat{\mathbf{H}}$ classically to find the approximation to the ground state: $|\Psi\rangle = \sum_j \mathbf{\Psi}_j |b_j\rangle$, where $b_j$ is a symbol that summarizes the unique elements $a_{km}$ in $\mathcal{X}$.
\end{enumerate}

\begin{figure}[t]
\centering
\includegraphics[width=0.95\linewidth]{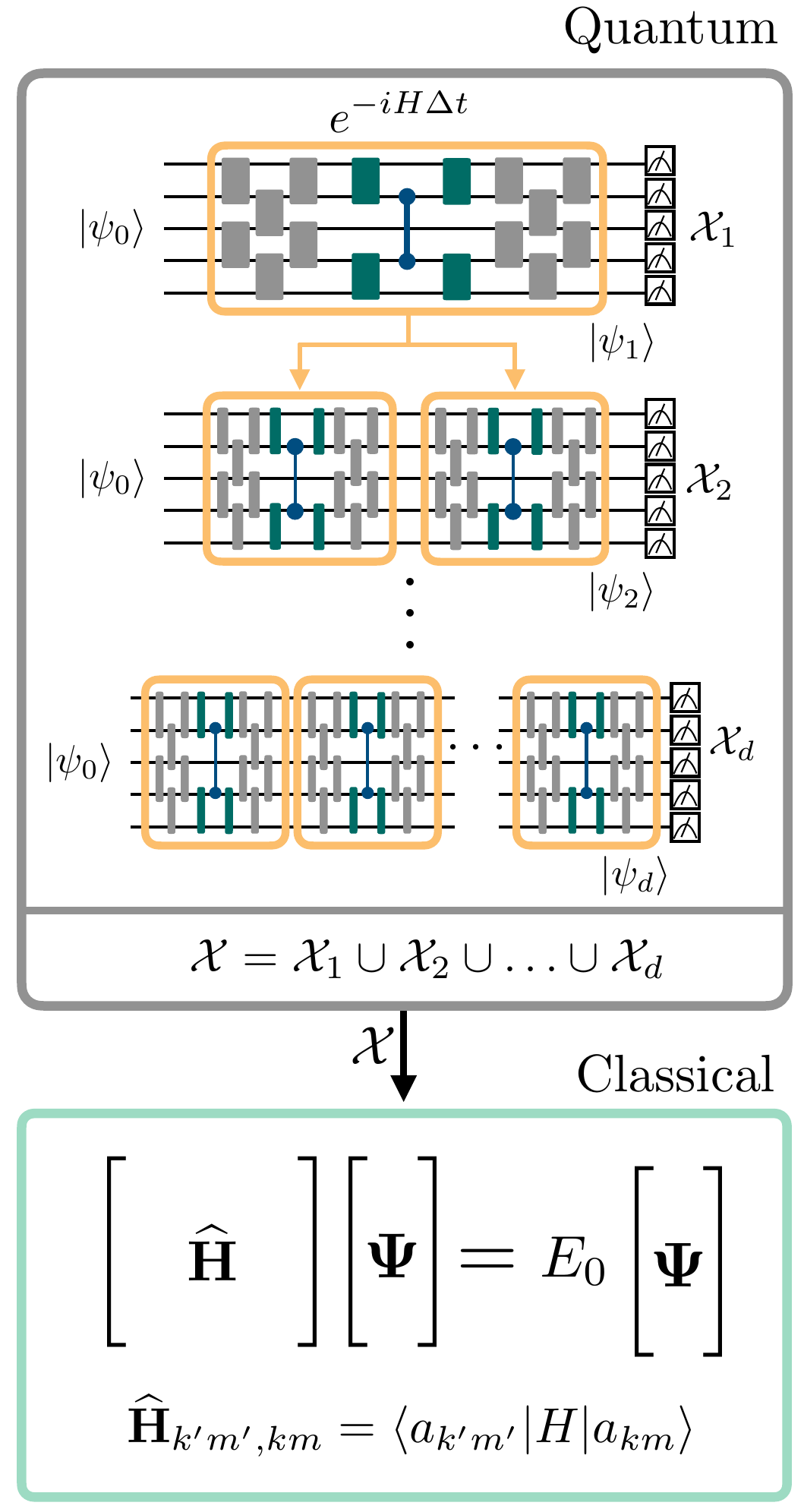}
\caption{\textbf{SQKD algorithm.} A Krylov subspace is constructed by the time evolution of a reference state $|\psi_0\rangle$ to $d$ different times. At the end of each circuit, the state is measured in the computational basis, yielding a set of measurement outcomes $\mathcal{X}$. The computational basis states sampled in $\mathcal{X}$ are used to span an approximation to the ground state of the system $|\Psi\rangle = \sum_j \mathbf{\Psi}_j |b_j\rangle$ for $b_j \in \mathcal{X}$. The components $\mathbf{\Psi}_j$ are obtained in closed form by the diagonalization of the projection of $H$ in the subspace spanned by the sampled bitstrings. }
\label{fig: algorithm}
\end{figure}

Note that the matrix $\widehat{\mathbf{H}}$ is of polynomial size and thus can be diagonalized classically, and that the $\mathbf{\Psi}_j$ components are the entries of the eigenvector of $\widehat{\mathbf{H}}$ with smallest eigenvalue. In the presence of noise, we also perform configuration recovery based on the $U(1)$ symmetry of the problem, as introduced in \cite{ibm2024chemistry}. A key feature of SKQD is its robustness to noise, which comes with additional classical cost in performing configuration recovery and diagonalization, making it well-suited for noisy quantum processors. We demonstrate this capability on IBM quantum processors in \cref{sec:experiments}. Moreover, in the Supplementary Information we show evidence of SKQD outperforming the standard KQD approach under a fixed shot budget. 

\medskip 

Before presentig our experimental results, we argue that SKQD algorithm converges efficiently under a notion of sparsity of the ground state $\ket{\phi_0}$ of $H$. We define sparsity as follows:
\begin{definition}[$(\alpha_L, \beta_L)$-sparsity]
\label{def:concentration}
For any state $\ket{\psi}$, let
\begin{equation}
    \ket{\psi} = \sum_{j=1}^Ng_j\ket{b_j},
\end{equation}
where $(b_1, \dots, b_N)$ is some ordering of length-$N$ bitstrings such that $|g_1|\geq |g_2|\geq \cdots |g_N|$.
We say that $\ket{\psi}$ exhibits \emph{$(\alpha_L, \beta_L)$-sparsity} on $\ket{b_1}$ through $\ket{b_L}$ if
\begin{equation}
    \sum_{j=1}^L |g_j|^2 \geq \alpha_L
\end{equation}
and
\begin{equation}
    |g_1|^2, \dots, |g_L|^2 \geq \beta_L.
\end{equation}
\end{definition}

Let $L$ be the smallest possible integer and $\alpha_L^{(0)}, \beta_L^{(0)}$ be the largest possible parameters such that the ground state $\ket{\phi_0}$ exhibits $(\alpha_L^{(0)}, \beta_L^{(0)})$-sparsity.

We introduce the following notation for the spectrum of $H$ and the initial state used in SKQD. Let $E_0 \le E_1 \le \dots \le E_{N-1}$ denote eigenvalues of $H$ with corresponding orthonormal eigenstates $\ket{\phi_0}, \dots, \ket{\phi_{N-1}}$. Let $\Delta E_j = E_j - E_0$ for each $0<j<N$ and let $\ket{\psi_0} = \sum_{k=0}^{N-1}\gamma_k \ket{\phi_k}$ be the eigenstate decomposition of the initial state.

\subsection{Convergence guarantees}

We argue that if the ground state exhibits sparsity as defined in \cref{def:concentration}, then SKQD can efficiently estimate the ground-state energy with bounded error. In particular, our main convergence result is as follows:
\begin{theorem}\label{mthm:skqd-thm}
Let $H$ be a Hamiltonian whose ground state $\ket{\phi_0}$ exhibits $(\alpha_L^{(0)}, \beta_L^{(0)})$-sparsity.
Let $\ket{\Psi}$ be the lowest energy state supported on the $L$ important bitstrings in $\ket{\phi_0}$.
The error in estimating the ground state energy of $H$ using SKQD is bounded by
\[
\braket{\Psi | H | \Psi} - \braket{\phi_0 | H | \phi_0} \leq \sqrt{8} \norm{H} 
\left(1 - \sqrt{\alpha_L^{(0)}}\right)^{1/2},
\]
provided all $L$ important bitstrings are sampled. The success probability of sampling all $L$ important bitstrings is at least $1 - \eta$ as long as the number of samples from each Krylov basis state exceeds $\left(d^2 \log(L/\eta)\right)/\left(|\gamma_0|^2 (\beta_L^{(0)} - 2\sqrt{\tilde{\varepsilon}})\right)$, where
\[
\tilde{\varepsilon} = 2-2\sqrt{1-\varepsilon/\Delta E_1}
\]
with 
\begin{align}\label{meq:eps-kqd}
    \varepsilon = 8 \Delta E_{N-1} \left(\frac{1 - \abs{\gamma_0}^2}{\abs{\gamma_0}^2}\right) \parens{1 + \frac{\pi \Delta E_1}{\Delta E_{N-1}}}^{-(d-1)},
\end{align}
where $|\gamma_0|^2 = \abs{\braket{\phi_0 | \psi_0}}^2$ denotes the overlap of the initial state $\ket{\psi_0}$ with the true ground state $\ket{\phi_0}$, and the timestep in \cref{meq:krylov-basis} is chosen as $\Delta t = \pi/\Delta E_{N-1}$.
\end{theorem}

Thus, our analytical bound on the additive error in approximating the ground state energy of $H$ depends on the sparsity parameter $\alpha_L^{(0)}$. Additionally, the number of samples required to find all $L$ important bitstrings is inversely proportional to $|\gamma_0|^2$, a requirement similar to the KQD method (see \cref{sec:methods}). 

To prove \cref{mthm:skqd-thm}, we develop several key results. We provide a brief proof for \cref{mthm:skqd-thm} in \cref{sec:methods} and a detailed proof in \cref{sec:kqd,app:proofs}; here we summarize the main ideas. First, we recall from \cite{Epperly_2022} that the KQD method achieves an additive error $\varepsilon$ in the energy as in \cref{meq:eps-kqd}. It then implies that the error in approximating $\ket{\phi_0}$ using $\ket{\Psi}$ is also small. We invoke the sparsity of the true ground state to show that $\ket{\Psi}$ is also sparse. Using this, we prove that each relevant bitstring in the ground state has an overlap proportional to $|\gamma_0|^2$ with at least one of the Krylov basis states. Combining these results, we argue that all $L$ relevant bitstrings can be obtained with high probability, and thus the ground state and the corresponding ground state energy can be approximated with small error.  

We note that if each time-evolution unitary $e^{-i k H \Delta t}$ incurs $\gamma$ Trotter error, then the number of samples needed from each Krylov basis state in \cref{mthm:skqd-thm} scales as $\left( \log(L/\eta)\right)/\left(|\gamma_0|^2 (\beta_L^{(0)} - 2\sqrt{\tilde{\varepsilon}})/d^2 - \gamma\right)$. We provide a proof in \cref{sec:trotter_error}.

\subsection{Experiments on impurity models}\label{sec:experiments}
To asses the accuracy of SKQD applied, we consider impurity models with increasing number of impurities as a testbed~\cite{Anderson1961,wu2022disentanglinginteractingsystemsfermionic, Barzykin1998Anderson, Holtzner2009Anderson}. 
The generic Hamiltonian for an impurity model is given by:
\begin{equation}
    H = H_{\textrm{imp.}} + H_\textrm{bath} + H_\textrm{hyb.},
\end{equation}
The impurity term describes electrons that can hop between $L$ different impurities, with a Hubbard-like onsite repulsive interaction of strength $U$:
\begin{equation}
    H_\textrm{imp.} = \sum_{\substack{l,l' = 1 \\ \sigma \in \{ \uparrow, \downarrow\}}}^L t_{ll'} \hat{d}^\dagger_{l\sigma} \hat{d}_{l'\sigma} + U \sum_{ l =1}^L \hat{d}^\dagger_{l\uparrow}\hat{d}_{l\uparrow} \hat{d}^\dagger_{l\downarrow}\hat{d}_{l\downarrow},
\end{equation}
where $\hat{d}^\dagger_{l\sigma}/\hat{d}_{k\sigma}$ are the creation/annihilation operators for impurity mode $l$ with spin $\sigma$. The symmetric matrix with $t_{ll'}$ elements describes the hopping amplitudes between impurities, and its diagonal part is a chemical potential, which is set to $t_{ll} = U/2$.

 $H_\textrm{bath}$ describes a number $K$ of non-interacting fermionic modes per impurity. Given its non-interacting nature, the bath can always be written in the single-particle basis where it is diagonal:
\begin{equation}\label{eq: bath Ham main}
    H_\textrm{bath} = \sum_{\substack{ l = 1 \\\sigma \in \{ \uparrow, \downarrow\} }}^L \sum_{\mathbf{k} = 1}^K\varepsilon_\mathbf{k} \hat{c}^\dagger_{\mathbf{k}l\sigma} \hat{c}_{\mathbf{k}l\sigma},
\end{equation}
where $\hat{c}^\dagger_{\mathbf{k}l\sigma}/\hat{c}_{\mathbf{k}l\sigma}$ are the creation/annihilation operators for mode $k$ and spin $\sigma$, associated to impurity $l$. $\varepsilon_\mathbf{k}$ is the energy of each bath mode.

\begin{figure*}[t]
\centering
\includegraphics[width=0.99\linewidth]{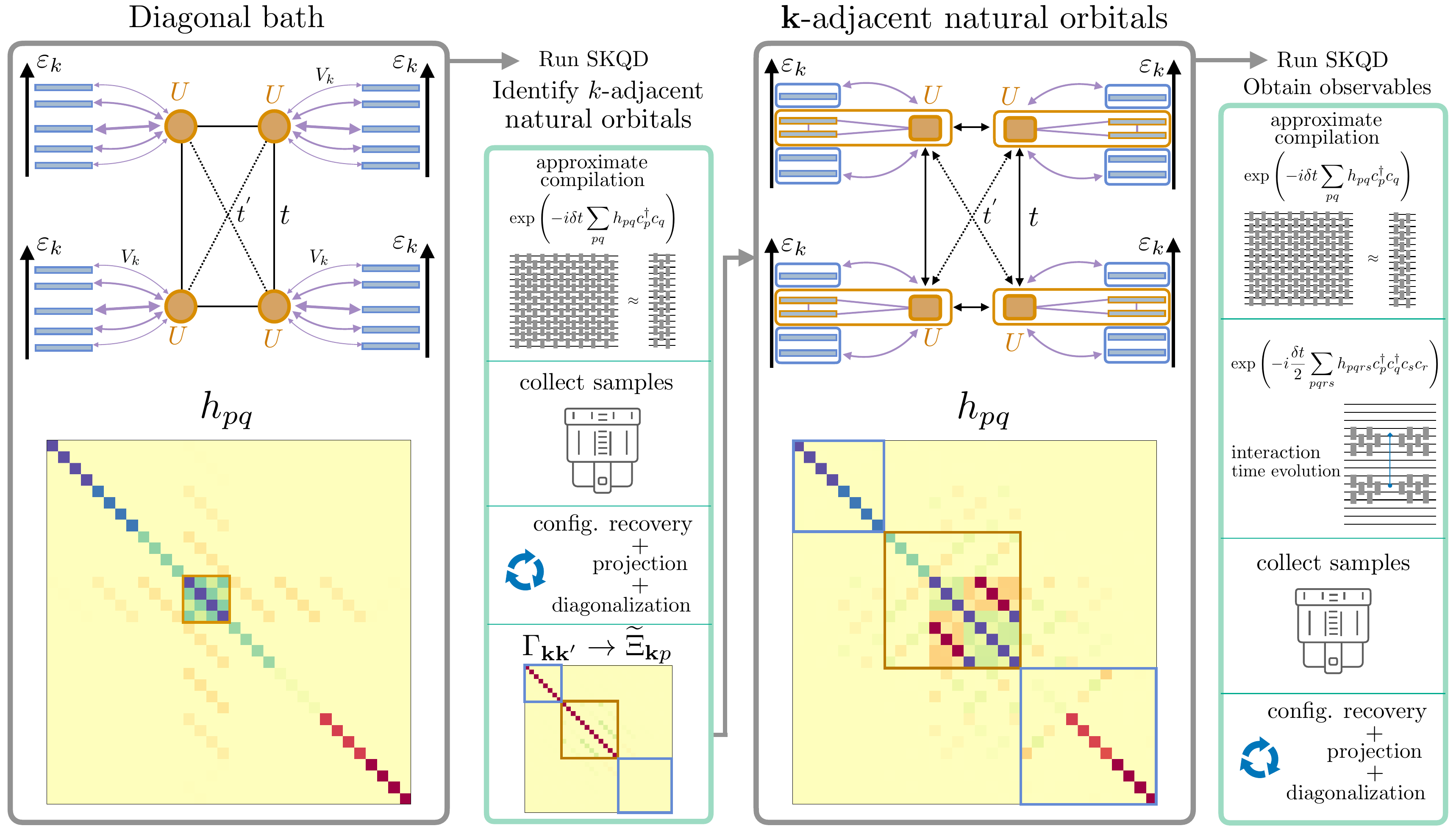}
\caption{\textbf{SQKD experimental workflow for the ground state of four-impurity model.}  From left to right: the 4-impurity model in the basis where the baths are diagonal, with the corresponding one-body matrix elements of the Hamiltonian $h_{pq}$. The brown box shows the impurity modes. SKQD is first run in this basis. The first step is the compilation of the free-fermion time evolution into a shallow circuit of Givens rotations. Then, measurement realizations are collected from the quantum device at each Trotter step, followed by an SQD ground-state estimation that uses the configuration recovery procedure, as introduced in Ref.~\cite{ibm2024chemistry}. The one-body reduced density matrix $\Gamma_{\mathbf{kk}'}$ is used to identify $\mathbf{k}$-adjacent natural orbitals, where the impurity mode is only allowed to be mixed with the bath modes corresponding to $\mathbf{k}_f$ and $\mathbf{k}_f- 1$. The resulting Hamiltonian is one where the one-body matrix elements $h_{pq}$ are close to diagonal deep in the Fermi sea and for large values of $\mathbf{k}$, and with off-diagonal two-body matrix elements. SKQD is run in this new basis, requiring the approximate compilation of the free-fermion evolution, and the efficient compilation into a constant-depth circuit of the off-diagonal two-body terms.}
\label{fig: impurity bases}
\end{figure*}

 The hybridization term describes the hopping of electrons between impurities and their corresponding bath sites:
\begin{equation}
    H_\textrm{hyb.} = \sum_{\substack{ l = 1 \\\sigma \in \{ \uparrow, \downarrow\} }}^L \sum_{\mathbf{k} = 1}^K V_\mathbf{k} \left(\hat{c}^\dagger_{\mathbf{k}l\sigma} \hat{d}_{l\sigma} + \hat{d}^\dagger_{l\sigma} \hat{c}_{\mathbf{k}l\sigma} \right)
\end{equation}
where $V_{\mathbf{k}}$ is the so-called hybridization function. Given band-width of the bath $D = \max_\mathbf{k}(\varepsilon_\mathbf{k})-\min_\mathbf{k}(\varepsilon_\mathbf{k})$, we consider semicircle-like hybridization functions $V_{\mathbf{k}} = V\sqrt{(D/2)^2 - \varepsilon_\mathbf{k}^2}$, with $V$ a parameter that controls the hybridization amplitude. We only consider the subspace corresponding to half filling and zero total magnetization.

In this manuscript we study two families of impurity models. The first is the single-impurity Anderson model (SIAM) whose $\varepsilon_\mathbf{k}$ and values of $V_\mathbf{k}$ are derived from a 1D bath geometry with open boundary conditions (see Sec.~\ref{sec:methods} for further details). We choose a system size $K = 41$ bath modes, making a total of 42 spinful modes and, values of $U = 1, 3, 7, 10$. 

The second is a 4-impurity model with $K = 7$ bath modes per impurity, making a total of 32 spinfull fermionic modes. The impurity modes are arranged in a square geometry, with hopping amplitudes  $t_{l, l+1} = t = -1$ and $t_{l, l+2} = t'= -0.5$, and a value of $U = 10$. The values of $\varepsilon_\mathbf{k}$ are sampled from a uniform distribution with $\max_\mathbf{k}(\varepsilon_\mathbf{k}) = 2$ and $\min_\mathbf{k}(\varepsilon_\mathbf{k}) = -2$ (see Sec.~\ref{sec:methods}  for the specific values). The hybridization amplitude $V = 0.16$ is chosen as the case where Heat Bath Configuration Interaction (HCI) and DMRG required the largest computational resources to converge, as detailed in Sec~\ref{sec:methods}. Figure~\ref{fig: impurity bases} shows an schematic representation of the 4-impurity model.

The similarity transformation of the Hamiltonian by a fermionic Gaussian unitary (orbital rotation, or single-particle basis change) can impact the accuracy of many-body methods in the approximation of the ground-state properties of the problem~\cite{moreno2023orbitalOptims, wu2022disentanglinginteractingsystemsfermionic, roos1980complete, head1988optimization, werner1985second, olsen2011casscf, malmqvist1990restricted, zgid_density_2008, ghosh_orbital_2008, wouters_density_2014, ma_second-order_2017}. In the limit of vanishing $V$, the ground state of the impurity models is sparse in the basis where the bath is diagonal, since the state of each bath can be described by a single Slater determinant, and the ground state of the impurities can be described by a small number of basis states. As the value of $V$ increases, it becomes energetically favorable to allow the hopping of electrons between the impurity and corresponding bath, resulting in an increased number of Slater determinants required to obtain an accurate description of the ground state. Motivated by the observation that the basis of so called natural orbitals (NOs) that diagonalizes the one-body reduced density matrix $\Gamma$ yields the set of orbitals in which the wave function is closest to a single Slater determinant, we perform a two-step SKQD experiment. The first SKQD calculation is performed in the basis where the bath is diagonal, obtaining an approximation to $\Gamma$. The reduced density matrix is diagonalized in blocks in order to find approximate NOs that only mix impurity modes with two or three bath modes for the 4-impurity and single-impurity models respectively.  The bath modes that are allowed to be mixed with the impurity mode are those closest to the Fermi level ($\varepsilon_{\mathbf{k}_f} = 0$). This procedure is summarized in Fig.~\ref{fig: impurity bases}.

We use the Jordan-Wigner~\cite{JordanWigner1928} encoding to map the fermionic degrees of freedom into the quantum processor. The second-order Trotter-Suzuki decomposition is used to realize each $|\psi_k\rangle$, as described in Sec.~\ref{sec:methods}. The mappings into the quantum processors for the SIAM and 4-impurity models are shown in panels (a) and (d) of Fig.~\ref{fig: results}.

To benchmark the accuracy of the quantum experiments, we choose DMRG as a reference classical method, since it is one of the \textit{state-of-the-art} approximate methods for single-band, single-impurity models~\cite{Holtzner2009Anderson, wu2022disentanglinginteractingsystemsfermionic, varbench}. Several physical properties are compared between the DMRG and SKQD estimations. The first is the relative error in the SKQD ground-state energy estimation, defined as $|(E_\textrm{SKQD}-E_\textrm{DMRG})/E_\textrm{DMRG}|$. Additionally, we compare the estimation of other relevant physical properties, such as two-point correlation functions.

\begin{figure*}[t]
\centering
\includegraphics[width=1\linewidth]{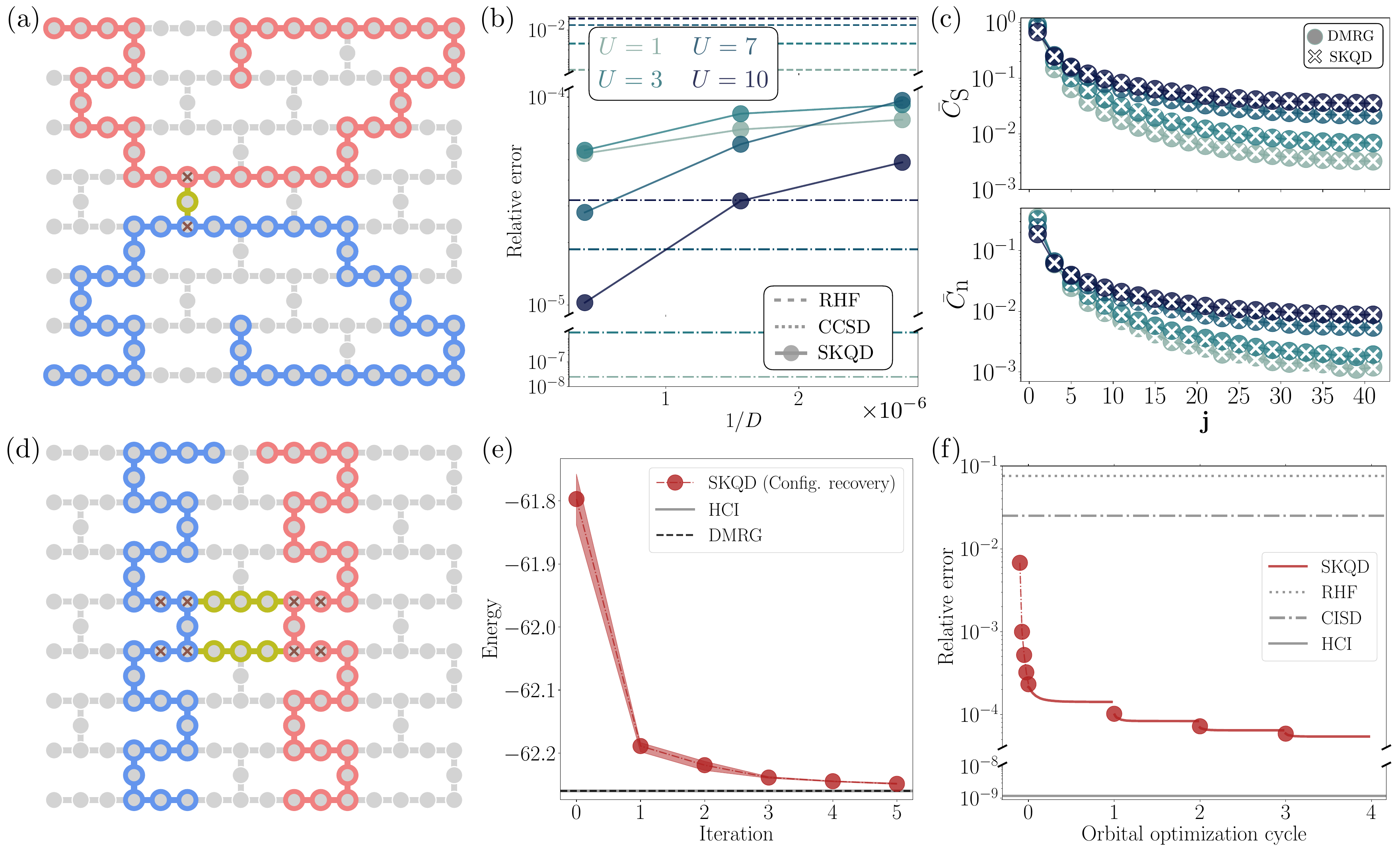}
\caption{\textbf{Experiments on quantum processors, and comparison against DMRG.} \textbf{(a)-(c)} SIAM with 41 bath sites (85-qubit experiment). 
\textbf{(a)} Shows the qubit layout, where red and blue qubits correspond to spin-up and spin-down degrees of freedom respectively. The qubit marked with a cross is the qubit representing the impurity. The green qubit is an auxiliary qubit used to implement the time evolution of the Hubbard interaction.
\textbf{(b)} Ground state energy error relative to the DMRG estimation as a function of the inverse of the SKQD subspace dimension $D$. Different colors correspond to different values of the onsite repulsion $U$ as indicated in the legend. The Hartree-Fock energies (RHF) and Coupled Cluster with Singles and Doubles (CCSD) errors are shown for reference. 
\textbf{(c)} Comparison of the two-point spin $\bar{C}_\textrm{S}$ and density $\bar{C}_\textrm{n}$ correlation functions (see Eqs.~\ref{eq: spin correlation} and~\ref{eq: density correlation}) obtained with DMRG and SKQD. 
\textbf{(d)-(f)} 4-impurity model ($L = 4$) with $K = 7$ bath models per impurity (70-qubit experiment). \textbf{(d)} Shows the qubit layout, where red and blue qubits correspond to spin-up and spin-down degrees of freedom respectively. The qubits marked with a cross correspond to the qubits representing the impurity. The green qubits are auxiliary qubits used to implement the time evolution of the Hubbard interaction. \textbf{(e)} SQKD ground state energy estimation in the $\mathbf{k}$-adjacent NOs basis as a function of the configuration recovery step. The HCI and DMRG energy estimations are shown for reference. \textbf{(f)} Effect of orbital optimizations applied to the converged SKQD ground state estimation. The error in the ground-state energy relative to the converged DMRG energy is shown as a function of the self-consistent orbital optimization cycle. The dots connected by the dashed lines correspond to the configuration recovery trajectory shown in panel (e). The solid lines show the improvement of the energy error when the orbitals are optimized to minimize the energy for fixed wave function coefficients. The dots show the error after re-diagonalizing the Hamiltonian in the new basis found by the orbital optimization procedure. The RHF, CISD and HCI errors are shown for reference.}
\label{fig: results}
\end{figure*}

Figure~\ref{fig: results} (b) shows the relative error in the SKQD ground-state energy estimation as a function of the subspace dimension on the SKQD eigenstate solver $D$. The SKQD relative error decreases from values $\sim 10^{-4}$ to $\sim 10^{-5}$ as $U$ increases from $U = 1$ to $U = 10$, which is the opposite behavior as compared to RHF and CCSD solutions. The SKQD estimations of the ground state energy become more accurate with increasing correlations in the system. This is a consequence of the increased ground-state sparsity for larger values of $U$. Panel (c) of Fig.~\ref{fig: results} compares the values of the two-point spin $\bar{C}_\textrm{S}(\mathbf{j})$ (Eq.~\ref{eq: spin correlation}) and density $\bar{C}_\textrm{n}(\mathbf{j})$ (Eq.~\ref{eq: density correlation}) correlation functions obtained from SKQD to those obtained with DMRG. The SKQD estimations are in excellent agreement with the DMRG values for all values of $\mathbf{j}$, the distance between the impurity spin and the bath spin. Additionally, in the Supplementary Materials we show that the SKQD error in the SIAM does not increase as the number of bath sites is increased from $K = 29$ to $K = 41$. 

Figure~\ref{fig: results} (e) shows the SKQD energy convergence as a function of the self-consistent configuration recovery iteration~\cite{ibm2024chemistry}, in the basis of $\mathbf{k}$-adjacent NOs. An additional retention and carry-over of the most relevant configurations is considered as compared to the procedure introduced in Ref.~\cite{ibm2024chemistry} (see Sec.~\ref{sec:methods} for additional details). The final energy estimation is on good agreement with the DMRG and HCI estimates. Once the recovery procedure has identified the relevant electronic configurations and corresponding wave-function components, we perform a final set of 5 self-consistent orbital optimizations cycles, where we alternatively optimize over single-particle basis transformations and wave function coefficients. The improvement on the energy error as a function of the orbital optimization cycle is shown in Fig.~\ref{fig: results} (f). The SKQD energy estimation has a relative error in the ground state energy of $\sim 10^{-4}$, in the basis of $\mathbf{k}$-adjacent NOs, while the relative error in the basis of optimized orbitals decreases to $\sim 6 \cdot 10^{-5}$.

\section{Discussion}

We obtained our ground-state energy estimations via sampling from Krylov states. Consequently and in order to keep low circuit depths, the class of problems which are amenable to SKQD are those that be mapped easily using the layout of the quantum processor. Thus, \textit{ab-initio} use cases are out of reach for the SKQD implementation as presented in this manuscript in near term processors. We leave it to future investigations the adaptation of SQKD to \textit{ab-initio} problems with reduced-depth circuits. On lattice problems, SKQD resolves practically every issue of algorithms for ground states: it does not require optimization of ansatzaes, it does not incur in the quantum measurement problem, and it is robust to noisy samples since one can use configuration recovery and a classical diagonalization overhead to effectively remove the effect of noise. 

SKQD shares many convergence properties with the standard Krylov quantum diagonalization~\cite{yoshioka2024diagonalization}. However, it requires reduced circuit depths since it doesn't need to execute Hadamard tests as subroutines, and has improved noise resistance properties, as mentioned before. 
While our convergence proofs requires sparsity of the ground state, in practice SKQD could  be used to study non-sparse ground states, in scenarios where basis states are captured by the Krylov circuits. 

We performed experiments with circuits up to 85 qubits and $\sim 6 \cdot 10^3$ two-qubit gates simulating the ground-state properties of the single-impurity Anderson model for different values of the onsite repulsion strength $U$, obtaining excellent agreement with DMRG and HCI calculations of the same system. Experiments were also performed in circuits with 70 qubits and $\sim 6 \cdot 10^3$ two-qubit gates studying the ground-state of generic 4-impurity models, again achieving excellent agreement with DMRG and HCI calculations. This confirms that SKQD can be used as to probe ground state physics on pre-fault-tolerant quantum computers, exceeding the reach of existing quantum methods for lattice problems, for system sizes well beyond the reach of exact diagonalization methods.

\bigskip

\textit{Note added.} While finalizing our paper, we noticed two independent papers on arXiv that share some of the ideas presented here~\cite{sugisaki2024SKQD,mikkelsen2024SKQD}.

\section{Code and data availability}
The simulation of the time evolution of fermionic Hamiltonians is carried out with the library \texttt{ffsim}~\cite{ffsim}, while configuration recovery, projection and diagonalization are carried out with the python package \texttt{qiskit-addon-sqd}~\cite{sqd_addon}. Quantum circuits are generated and transpiled using \texttt{qiskit}~\cite{qiskit2024}. DMRG calculations are performed using the \texttt{block2} package~\cite{block2}. HF and CCSD calculations are performed using the PySCF library~\cite{sun2018pyscf,sun2020recent}.

\begin{acknowledgments}
    We acknowledge helpful discussions with 
    Jay Gambetta,
    Toshinari Itoko,
    and Caleb Johnson. We acknowledge the use of tacokit and are grateful to Zlatko Minev for development of the package and troubleshooting support. This research was supported by the Quantum Science Center, a National Quantum Science Initiative of the Department of Energy (DOE), managed by the Oak Ridge National Laboratory (ORNL).  The work by T.A.M. (model selection and analysis) was supported by the U.S. Department of Energy, Office of Science, Basic Energy Sciences, Materials Sciences and Engineering Division. D.C. and L.B. acknowledge support from the Laboratory Directed Research and Development Program of Oak Ridge National Laboratory, managed by UT-Battelle, LLC, for the US Department of Energy. This material is based upon work supported by the U.S. Department of Energy, Office of Science, National Quantum Information Science Research Centers, Quantum Science Center (QSC). This research used resources of the Oak Ridge Leadership Computing Facility, which is a DOE Office of Science User Facility supported under Contract DE-AC05-00OR22725.
\end{acknowledgments}

\section*{Author Contributions}
Design of the workflow and experiments: J.Y., J.R.-M., P.G., W.K., T.A.-M., M.M., K.J.-S., M.T., A.M., and K.S. Implementation of the workflow: J.R.-M., M.A., P.G., P.J., M.M., B.P., and V.T.  Execution of the quantum part of the workflow on Heron: B.P., P.J., and V.T. Execution of the classical part of the workflow on Frontier: P.G. Numerical benchmarks: J.Y., J.R.-M., J.T.-I., M.M., and K.J.S. Analytical derivations: J.Y., J.R.-M., J.T.-I.,  W.K., M.T., and K.S. All authors contributed to the manuscript writing and data analysis.

\section{Methods}\label{sec:methods}

Before providing proof details of \cref{mthm:skqd-thm}, we recall prior results on KQD and SQD. 
As discussed in \cref{msec:skqd}, let $\ket{\psi_0}$ denote an initial (reference) state. In KQD, the quantum Krylov subspace $\mathcal{S}$ is generated by the time-evolved states $\ket{\psi_k}$ as in \cref{meq:krylov-basis}, 
where $k \in \{0, 1, \dots, d-1\}$. This construction reduces the exponentially large $N$-dimensional Hilbert space to a subspace of dimension~$d$. By projecting the Hamiltonian onto $\mathcal{S}$, we get the following generalized eigenvalue problem:
\begin{equation}
\label{eq:gen-eig}
\mathbf{\widetilde{H}} v = \widetilde{E} \mathbf{\widetilde{S}} v,
\end{equation}
where $v$ is a coordinate vector corresponding to the state ${\ket{\widetilde\phi}=\sum_{k=0}^{d-1} v_k\ket{\psi_k} \in \mathcal{S}}$, and 
\begin{equation}
\label{eq:gen-eig-mats}
\mathbf{\widetilde{H}}_{jk} \coloneq \braket{\psi_j | H | \psi_k}, \qquad \mathbf{\widetilde{S}}_{jk} \coloneq \braket{\psi_j | \psi_k}.
\end{equation}

Let $\widetilde{E}$ denote the ground-state energy approximation obtained from the Krylov quantum subspace approach. From the results of Epperly \emph{et al.} in~\cite{Epperly_2022}, it follows that in the ideal, noise-free case, $0\leq \widetilde{E}-E_0 \leq \varepsilon$, as in \cref{meq:eps-kqd}, where $E_0$ is the true ground-state energy. 

From \cref{meq:eps-kqd}, it follows that if the initial state has a nontrivial ground-state overlap $\abs{\gamma_0}^2 = \Theta(1)$, and if $H$ has a well-behaved spectrum (i.e., $\Delta E_{N-1}$ not growing too quickly and $\Delta E_1$ not too small), then a constant error $\varepsilon$ in approximating the ground-state energy can be achieved by setting $d = O(\log(1/\varepsilon))$ (see~\cref{sec:kqd} for a review of this analysis).  
Moreover, for an analysis of noisy KQD, we refer readers to~\cite{Epperly_2022,kirby2024analysis}. 

We now summarize the SQD framework for approximating ground-state energies, analyzing its behavior against a notion of sparsity of the ground state $\ket{\phi_0}$ of $H$. Similar to \cref{msec:skqd}, we consider $L$ to be the smallest possible integer and $\alpha_L^{(0)}, \beta_L^{(0)}$ be the largest possible parameters such that the ground state $\ket{\phi_0}$ exhibits $(\alpha_L^{(0)}, \beta_L^{(0)})$-sparsity.
If \( L \) bitstrings \( \ket{b_1}, \dots, \ket{b_L} \) can be sampled with high probability from a quantum circuit \cite{ibm2024chemistry}, then \(H\) can be represented in the subspace spanned by \( \{\ket{b_j}\}_{j=1}^L \), yielding a projected matrix \(\widehat{\mathbf{H}}\). By classically diagonalizing \(\widehat{\mathbf{H}}\), we obtain an approximation of \(E_0\). In \cite{ibm2024chemistry}, it was shown that using this approach, one can approximate \(E_0\) with an additive error of up to \(\varepsilon' = 2\sqrt{2}\Vert H \Vert \left(1-\sqrt{\alpha_L^{0}}\right)^{1/2}\) and a success probability of at least \(1-\eta\), provided that the number of samples obtained from the state exceeds \(2/\beta^{(0)}_L \log(1/\eta)\).

The KQD approach can achieve good accuracy in estimating the ground-state energy, provided the initial state has a nontrivial overlap with $\ket{\phi_0}$ and the spectrum of $H$ is well-behaved. However, this approach requires estimating the matrix elements of $\widetilde{\mathbf{H}}$ and $\mathbf{\widetilde{S}}$, as defined in \cref{eq:gen-eig-mats}, necessitating $O(1/\epsilon^2)$ samples to achieve error $\epsilon$, 
and additional sample overhead for noise mitigation. While these challenges do not rule out near-term implementation, as shown by~\cite{yoshioka2024diagonalization}, finding ways to circumvent them is clearly desirable. On the other hand, the SQD approach is more natural for near-term devices and well-suited for problems where the target ground state energy can be captured by a sparse wavefunction.

\subsection{Convergence proof}

We prove the convergence as follows. We begin with the sparsity assumption for ground states. Let the ground state $\ket{\phi_0}$ of $H$ exhibits $(\alpha_L^{(0)}, \beta_L^{(0)})$-sparsity, in the sense of \cref{def:concentration}. 
For our convergence proof, we assume the ground state is sparse in the sense that $L = \mathcal{O}(\poly(n))$. We provide a comparison of our sparsity assumption with those proposed in \cite{bravyi2024classical, aaronson2024verifiable} in \cref{sec:app-sparsity-notions}. 

We first prove that if the KQD method achieves an additive error $\varepsilon$ in the energy as in \cref{meq:eps-kqd}, then the error in the corresponding approximate ground state $\ket{\widetilde\phi}$ is
\begin{equation}
\label{eq:state_diff}
    {\norm{\ket{\widetilde\phi} - \ket{\phi_0}}^2 \leq \tilde{\varepsilon} = O\left(\frac{\varepsilon}{\Delta E_1}\right)}.
\end{equation}
Next, we show that if the true ground state $\ket{\phi_0}$ exhibits $(\alpha_L^{(0)}, \beta_L^{(0)})$-sparsity, then $\ket{\widetilde\phi}$ also exhibits $(\alpha_L, \beta_L)$-sparsity, with
\begin{equation}
\label{eq:alpha_L_beta_L}
    \alpha_L = \alpha_L^{(0)} - 2 \sqrt{\tilde{\varepsilon} },\quad \beta_L = \beta_L^{(0)} - 2 \sqrt{\tilde{\varepsilon} }.
\end{equation}

We denote the set of important bitstrings defining the ground state as $B = \{|b_i\rangle\}_{i=1}^L$, where a bitstring’s importance is determined by $|g_i|$, as defined in \cref{def:concentration}. We prove that each bitstring in $B$ has overlap proportional to $|\gamma_0|^2$ with at least one of the Krylov basis states.
We express each $\ket{\psi_k}$ in the computational basis as $
\ket{\psi_k} = \sum_{j=1}^N c^{(k)}_j \ket{b_j}$.
Assume that ${\ket{\widetilde\phi} = \sum_{k=0}^{d-1} d_k \ket{\psi_k}}$, where $\ket{\psi_k}$ is defined in \cref{meq:krylov-basis} and $|d_k| \leq \frac{1}{|\gamma_0|}$ for each $k=0,1,...,d-1$ (note that the latter condition is nontrivial because the $\ket{\psi_k}$ are nonorthogonal, so the squared norms of the $d_k$ need not sum to unity).
We then show that for each $1 \leq j \leq L$, there exists some $k$ such that
\begin{equation}
    |c^{(k)}_j|^2 \geq \frac{|\gamma_0|^2 \beta_L}{d^2},
\end{equation}
where one should note that $\beta_L$ also depends on $d$ via \cref{eq:state_diff,eq:alpha_L_beta_L} and the fact that the error $\varepsilon$ from KQD converges with increasing $d$.
Thus, one can efficiently obtain the $L$ important bitstrings by sampling from Krylov basis states, provided the initial state $\ket{\psi_0}$ has overlap $|\gamma_0|^2 \in \mathcal{O}($1/\text{poly}(n))$ $ with the exact ground state.

\subsection{Impurity model parameters}

\paragraph{Single-impurity Anderson model.--}
In this case the impurity Hamiltonian is simplified to:
\begin{equation}
    H_\textrm{imp} = \frac{U}{2} \left( \hat{n}_{d\uparrow} + \hat{n}_{d\downarrow} \right) + U \hat{n}_{d\uparrow}\hat{n}_{d\downarrow}.
\end{equation}
We consider a bath corresponding to a 1D chain of length $K$ with open boundary conditions, sharing the topology of the impurity model considered in Ref.~\cite{wu2022disentanglinginteractingsystemsfermionic}:
\begin{equation}\label{eq: bath Ham}
    H_\textrm{bath} = -t \sum_{\substack{\mathbf{j} = 0\\ \sigma\in \{\uparrow, \downarrow\}}}^{K-1} \left(\hat{c}^\dagger_{\mathbf{j}, \sigma}\hat{c}_{\mathbf{j}+1, \sigma} + \hat{c}^\dagger_{\mathbf{j}+1, \sigma}\hat{c}_{\mathbf{j}, \sigma} \right),
\end{equation}
Where the symbol $\mathbf{j}$ labels the bath sites in the position basis. The hybridization term describes the hopping between the first bath site and the impurity:
\begin{equation}
    H_\textrm{hyb} = V\sum_{\sigma \in \{\uparrow, \downarrow \}} \left(\hat{d}^\dagger_\sigma \hat{c}_{0, \sigma} + \hat{c}^\dagger_{0, \sigma} \hat{d}_{\sigma} \right).
\end{equation}
Due to the approximate translational symmetry present in the bath, we do not expect the ground state to be sparse in the position basis (Eq.~\ref{eq: bath Ham}). Since the bath is non-interacting, there exists a single-particle basis transformation that diagonalizes $H_\textrm{bath}$. We refer to the transformed basis as the basis where the bath is diagonal, whose fermionic modes are labeled by the sub-index $\mathbf{k}$ :
\begin{equation}
    \hat{c}^\dagger_{\mathbf{k}, \sigma} = \sum_{\mathbf{j} = 0}^{K-1} \Xi_{\mathbf{jk}} \hat{c}^\dagger_{\mathbf{j}, \sigma},
\end{equation}
where $\mathbf{k} = 0, \hdots, K-1$ and $\Xi\in \mathbb{R}^{L \times L}$ the orthonormal matrix that diagonalizes the $L \times L$ hopping matrix $T$:
\begin{equation}
    T = 
    \begin{bmatrix}
        0 & -t & 0 & 0 \hdots & 0 \\
        -t & 0 & -t & 0 \hdots & 0 \\
        0 & \ddots & \ddots & \ddots & 0 \\
        0 & \hdots & -t & 0 & -t\\
        0 & \hdots &  0& -t & 0\\
    \end{bmatrix}.
\end{equation}
While this is close to a momentum basis, each column in $\Xi$ does not exactly correspond to a basis vector of the discrete Fourier transform in a one-dimensional domain with $L$ points, due to the choice of open boundary conditions. In the basis that diagonalizes the bath, the bath and hybridization Hamiltonians take the form:
\begin{equation}
    H_\textrm{bath} = \sum_{\substack{\mathbf{k} = 0 \\ \sigma \in \{\uparrow, \downarrow\}}}^{L-1} \varepsilon_\mathbf{k} \hat{n}_{\mathbf{k}, \sigma},
\end{equation}
with $\varepsilon_\mathbf{k}$ the eigenvalues of $T$, and:
\begin{equation}
    H_\textrm{hyb} =  \sum_{\substack{\mathbf{k} = 0 \\ \sigma \in \{\uparrow, \downarrow\}}}^{L-1} V_{\mathbf{k}} \left(\hat{d}^\dagger_\sigma \hat{c}_{\mathbf{k}, \sigma} + \hat{c}^\dagger_{\mathbf{k}, \sigma} \hat{d}_{\sigma} \right).
\end{equation}
With $V_\mathbf{k} = V \cdot \Xi_{0\mathbf{k}} $. Note that the locality of the hybridization term is lost in favor of a sparse representation of the ground state. 

In this model the two-point spin and density correlation functions are relevant to study phenomena like the Kondo screening length~\cite{Barzykin1998Anderson, Borda2007Anderson, Holtzner2009Anderson}, or their universal collapse for different values of $U$~\cite{Holtzner2009Anderson, wu2022disentanglinginteractingsystemsfermionic}.
The staggered spin-spin correlation function is defined as:
\begin{equation}\label{eq: spin correlation}
    \bar{C}_\textrm{S}(\mathbf{j}) = (-1)^\mathbf{j} \bigg[ \left\langle \hat{\vec{S}}_d \cdot \hat{\vec{S}}_\mathbf{j} \right\rangle - \left\langle \hat{\vec{S}}_d \right\rangle \cdot \left\langle \hat{\vec{S}}_\mathbf{j} \right\rangle \bigg],
\end{equation}
where the spin operators are defined as: $\hat{S}^\mu_\mathbf{j} = \sum_{\alpha \beta} \sigma^\mu_{\alpha \beta} \hat{c}^\dagger_{\mathbf{j}, \alpha} \hat{c}_{\mathbf{j}, \beta}$, with $\mu = x, y, z$, and $\hat{S}^\mu_d = \sum_{\alpha \beta} \sigma^\mu_{\alpha \beta} \hat{d}^\dagger_{\alpha} \hat{d}_{\beta}$. The second one is the staggered density-density correlation function:
\begin{equation}\label{eq: density correlation}
    \bar{C}_\textrm{n}(\mathbf{j}) = (-1)^\mathbf{j} \sum_{\sigma \in \{\uparrow, \downarrow \}}\bigg[ \left\langle \hat{n}_{d\sigma}  \hat{n}_{\mathbf{j}\sigma} \right\rangle - \left\langle \hat{n}_{d\sigma} \right\rangle  \left\langle  \hat{n}_{\mathbf{j}\sigma} \right\rangle \bigg].
\end{equation}

\paragraph{4-impurity model.--}
The values of $\varepsilon_{\mathbf{k}}$ are shown in Fig.~\ref{fig:dispersion relation}, corresponding to the values obtained from drawing 7 random real numbers with uniform probability in the range $[-1, 1]$, and translated and scaled such that $\min_{\mathbf{k}}\varepsilon_{\mathbf{k}} = -2$ and $\max_{\mathbf{k}}\varepsilon_{\mathbf{k}} = 2$. The choice of a random dispersion relation for the bath modes is motivated by the desire to eliminate as much as possible the effect of an specific underlying lattice geometry for the bath.

\begin{figure}[h!]
\centering
\includegraphics[width=0.5\textwidth]{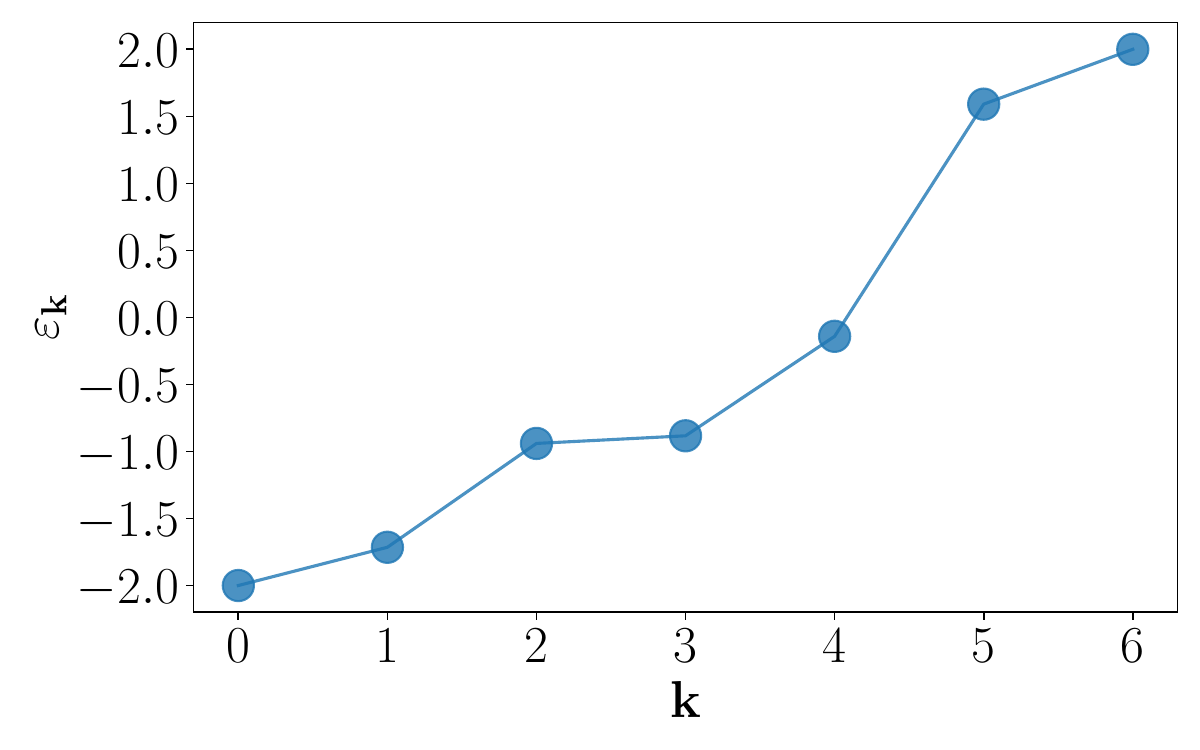}
\caption{Dispersion relation of the bath for the 4-impurity model. See Equation~\ref{eq: bath Ham main}.}
\label{fig:dispersion relation}
\end{figure}

\subsection{Computational complexity of the approximation of the ground state of the 4-impurity model as a function of $V$}

One of the goals of this study was to understand the performance of SKQD in relation to typical electronic structure methods for classical computers, as well as to identify the more challenging regimens for the classical methods, and focus on those regimes in the use of SQKD.
To this end, we focus on the 4-impurity model and the following values of the hybridization parameter, $V = 0.16, 0.32, 0.40, 0.60, 0.80, 1.60, 2.00$. For these Hamiltonians, we ran standard electronic methods: restricted closed-shell Hartree-Fock (RHF), Moller-Plesset second-order perturbation theory (MP2), configuration interaction singles and doubles (CISD), coupled-cluster singles and doubles (CCSD), and CCSD with perturbative triples (CCSD(T)) as implemented in PySCF~\cite{sun2018pyscf,sun2020recent} (we remark that CCSD does not converge for $V = 0.16$, and here we report the solution leading to the smallest (T) correction). In addition, we ran the density-matrix renormalization group (DMRG) as implemented in Block2~\cite{zhai2023block2}, and the heat-bath configuration interaction (HCI) as implemented in DICE~\cite{sharma2017semistochastic}. \\
We ran HCI using the basis of CCSD natural orbitals, using truncation thresholds of $\varepsilon = x \cdot 10^{-y}$ with $x=5,1$ and $y=3,4,5,6,7$.
The choice of natural orbitals is compelling for HCI, since natural orbitals may lead to a more compact representation of the ground-state wavefunction as a linear combination of Slater determinants than, e.g., molecular orbitals.
We ran DMRG using the basis of CCSD natural orbitals (ordered with a genetic algorithm), enforcing $SU(2)$ symmetry, and using a schedule with bond dimensions $D = 100,200,400,600,800,1000,2000,3000,4000$ and 8 sweeps per bond dimension, which allows us to converge the DMRG energy well within $10^{-10}$ in units of energy.

\begin{figure}[h!]
\includegraphics[width=0.45\textwidth]{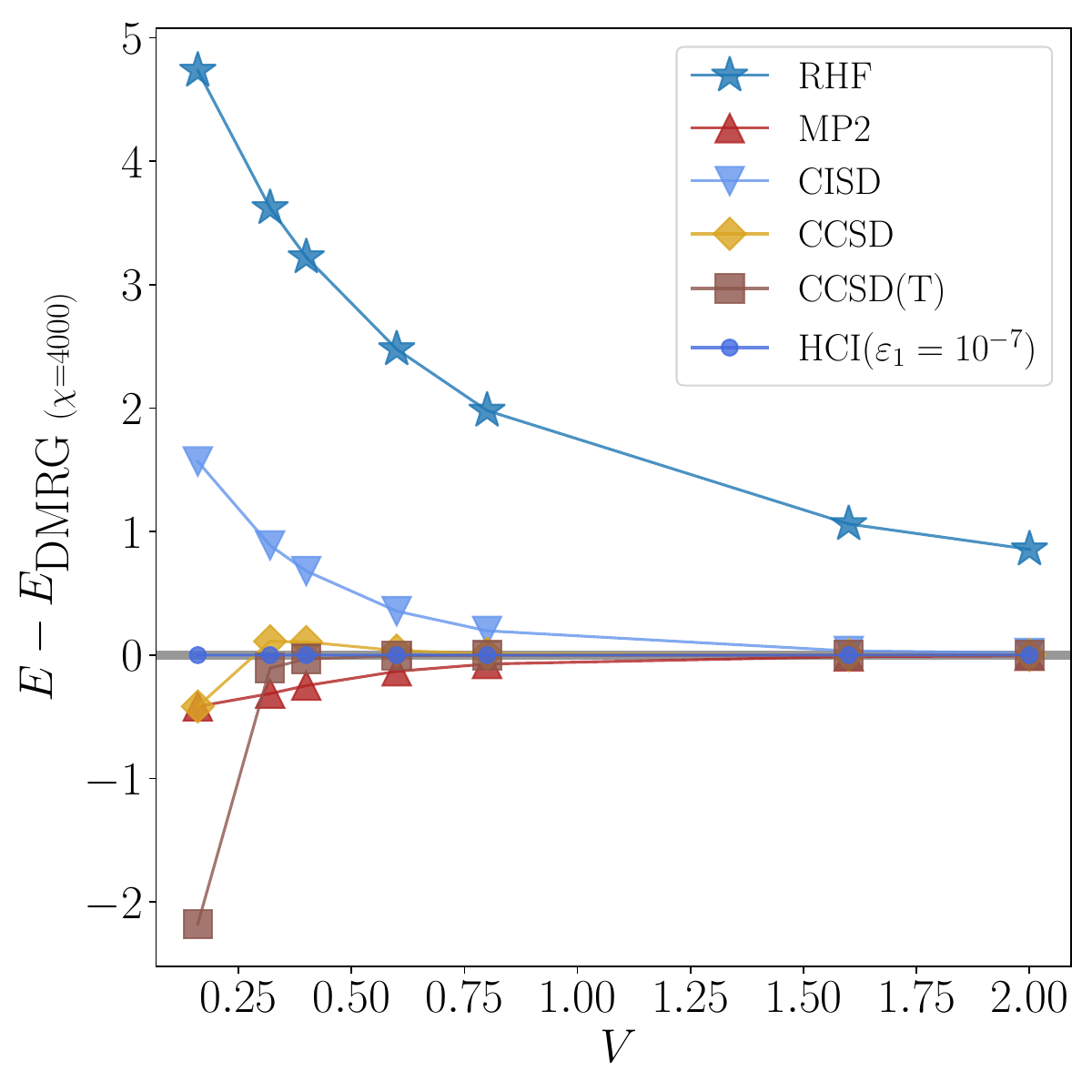}
\caption{Ground-state energy of the four-impurity model using the classical methods RHF, MP2, CISD, CCSD, and CCSD(T), HCI with truncation threshold $10^{-7}$ , and DMRG with bond dimension $D=4000$.}
\label{fig:classical_methods}
\end{figure}

The total energy from various electronic structure methods is shown in Fig.~\ref{fig:classical_methods}. RHF and CISD overestimate the ground-state energy, and MP2 underestimates it, to an extent that increases with decreasing $V$. On the other hand, where CCSD and CCSD(T) converge ($V > 0.16$), these methods agree with each other, with HCI, and with DMRG well within 0.001 in energy units. In fact, the largest deviation between the HCI and DMRG energies reported in the figure is $\sim  10^{-7}$ energy units.

In Fig.~\ref{fig:accuracy_scaling} we explore how the energy error of HCI and DMRG depends on the computational cost, defined as 
\begin{equation}
\varepsilon_{\mathrm{method}} = E_{\mathrm{method}}(P) - E_{\mathrm{ref}}
\end{equation}
where $E_{\mathrm{ref}}$ is a reference value for the ground-state energy (here, the DMRG energy with bond dimension 4000) and $P$ is a parameter that controls the computational cost of the method, i.e. the number $d_{\mathrm{HCI}}$ of configurations in HCI and the bond dimension $\chi$ of the MPS state in DMRG.

For both methods, the energy error follows a power law dependence on the inverse of the cost control parameter $1/P$ (either the subspace dimension for HCI or the bond dimension for DMRG), with exponents and prefactors that depend on both $V$ and the method. In other words,
\begin{equation}
\begin{aligned}
\varepsilon_{\mathrm{DMRG}}(V) & = C_{\mathrm{DMRG}}(V) \left( \frac{1}{D} \right)^{m_{\mathrm{DMRG}}(V)} \\
\varepsilon_{\mathrm{HCI}}(V) & = C_{\mathrm{HCI}}(V) \left( \frac{1}{d_{\mathrm{HCI}}} \right)^{m_{\mathrm{HCI}}(V)} \;.
\end{aligned}
\end{equation}
Such a power-law scaling, combined with the empirical observation that HCI and DMRG do not appear to be converging on local minima of the energy, means that decreasing the error of HCI and DMRG calculations below a desired threshold $\delta$ using a classical computer does not require an exponential cost in $1/\delta$, either in memory or in runtime. Therefore, the only way to achieve competitive results for SKQD or any other method with systematically improvable accuracy is to achieve a power-law behavior with a lower exponent and/or prefactor.

The results shown in Figs.~\ref{fig:classical_methods} and~\ref{fig:accuracy_scaling} show that the problem becomes harder for DMRG and HCI as the value of $V$ is decreased, since the system becomes less dominated by the kinetic energy terms. We choose the lowest value of $V = 0.16$ to test the accuracy of SKQD on hardware experiments.


\begin{figure*}
\includegraphics[width=0.99\textwidth]{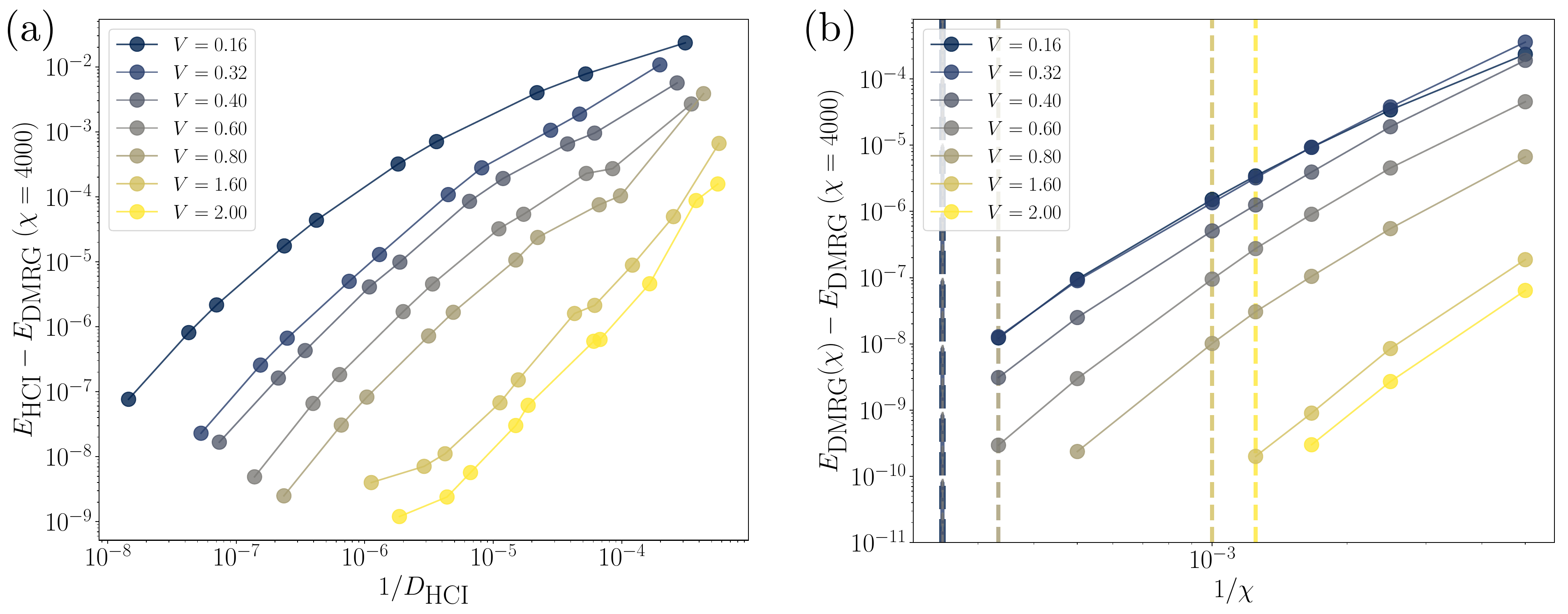}
\caption{Convergence of HCI and DMRG energy estimations as a function of the corresponding cost control parameter for the 4-impurity model. \textbf{(a)}Scaling of the HCI energy error (relative to the best DMRG energy) versus size of the corresponding CI vector for different values of the hybridization amplitude $V$. \textbf{(b)} DMRG energy error versus bond dimension using $E_{\mathrm{DMRG}}(\chi=4000)$ as reference. The vertical dashed lines indicate the value of bond dimension for which the DMRG energy estimation converges.}
\label{fig:accuracy_scaling}
\end{figure*}

\subsection{Compilation of time-evolution circuits}

We use the Jordan-Wigner~\cite{JordanWigner1928} encoding to map the fermionic degrees of freedom into the quantum processor. The second-order Trotter-Suzuki decomposition is used to realize each $|\psi_k\rangle$:
\begin{equation}
    |\psi_k\rangle \approx \left[e^{-i\frac{\Delta t}{2} H_2} e^{-i\Delta t H_1} e^{-i\frac{\Delta t}{2} H_2} \right]^k\ket{\psi_0},
\end{equation}
where $H_1$ and $H_2$ correspond to the one- and two-body terms of the generic interacting-electron Hamiltonian:
\begin{equation}\label{eq: interacting electron ham}
    H = \sum_{\substack{p, q \\ \sigma}} h_{pq} \hat{a}^\dagger_{p\sigma} \hat{a}_{q\sigma}  +  \sum_{\substack{p, q, r, s \\ \sigma \tau}} \frac{h_{pqrs}}{2} \hat{a}^\dagger_{p\sigma} \hat{a}^\dagger_{q\tau} \hat{a}_{s\tau} \hat{a}_{r\sigma}.
\end{equation}
Note that any of the impurity-model Hamiltonians considered in this manuscript can be written in the above form for specific values of $h_{pq}$ and $h_{pqrs}$, where the operators $\hat{a}^\dagger_p$ label both impurity and bath modes.

Since $e^{-i\Delta t H_1} = \exp\left(-i \Delta t \sum_{pq,\sigma} h_{pq} \hat{a}^\dagger_{p\sigma} \hat{a}_{q\sigma} \right)$ is a fermionic Gaussian unitary, it can be realized exactly by a brickwork circuit of Givens rotations, applied to adjacent pairs of qubits, whose depth is equal to the number of fermionic modes in each spin-species, where the qubits are arranged in a one-dimensional chain~\cite{wecker_solving_2015,Kivlichan2018GivensSlater,jiang_correlated_2018} (see Fig.~\ref{fig: impurity bases}). As shown in Fig.~\ref{fig: impurity bases}, $h_{pq}$ is close to diagonal both in the basis where the bath is diagonal, and in the basis of $\mathbf{k}$-adjacent natural orbitals. Additionally, we require $\Delta t \ll 1$, yielding a $e^{-i\Delta t H_1}$ unitary that is close to the identity. This observation is used to approximately compile $e^{-i\Delta t H_1}$ into a depth-3 for the SIAM, and depth-9 for the 4-impurity model brickwork circuit of Givens rotations with adjoint representation $G$, as shown in Fig.~\ref{fig: impurity bases}. The angles of the Givens rotations are obtained by maximizing the Hilbert-Schmidt norm of the $((K+1) \cdot L) \times ((K+1) \cdot L)$ matrix $G^\dagger \cdot \Lambda$, where $\Lambda$ is the matrix exponential of $\left[-i (\Delta t) h\right]$, with $h$ being the $((K+1) \cdot L) \times ((K+1) \cdot L)$ matrix with components $h_{pq}$. The optimization in the space of Gaussian unitaries is performed using gradient descent~\cite{moreno2023orbitalOptims} with the \textit{ADAM}~\cite{kingma2014Adam} update rule. Gradients are computed using the automatic differentiation functionalities of the \texttt{Jax} package~\cite{jax2018github}.

The compilation of $e^{-i\frac{\Delta t}{2} H_2}$ into a quantum circuit is simple when $H_2$ is diagonal, which is the case in the position basis, and the basis with the diagonal bath. In the Jordan-Wigner encoding, this is a controlled-phase gate between the two qubits representing the spin-up and spin-down impurity degrees of freedom. This gate is applied via an auxiliary qubits (green in Fig.~\ref{fig: results} (a) and (d)) that connect the spin-up and spin-down chains of qubits. 

In the basis of $\mathbf{k}$-adjacent natural orbitals, $H_2$ is no longer diagonal. Consequently, its time evolution would naively require circuit depths growing as $\mathcal
{O} \left( [(K+1)\cdot L]^4\right)$. By construction, $H_2$ can be diagonalized by a single-particle basis transformation that mixes only 4 fermionic modes for the SIAM and 12 for the 4-impurity model. This basis transformation can be achieved by the application of Givens rotation gates acting on 4 (or 12) adjacent qubits that implement the Gaussian unitary of the change of basis (see Fig.~\ref{fig: impurity bases}). After this transformation has been applied, $H_2$ is diagonal and its time evolution is realized as described in the previous paragraph. This is depicted in Fig.~\ref{fig: impurity bases}.

For each spin-species, the initial state $\ket{\psi_0}$ is given by the superposition of all possible excitations of the three electrons closest to the Fermi level into the 4 closest empty modes starting from the state $|0000\hdots 01 \hdots 1111\rangle$, and realized by the application of 7 Givens rotation gates of rotation angle $\pi/4$.


\subsection{Experiment details}

The experiments were run on IBM Quantum's \textit{ibm\_fez}, a Heron r2 processor with 156 fixed-frequency transmon qubits with tunable couplers on a heavy-hex lattice layout. $L=1$ and $K = 29$ required 61 qubits to implement, while $L=1$ and $K = 41$ required 85 qubits. Each Krylov dimension was sampled with $\num{1e5}$ shots. We implemented $K=29$ on both the basis with diagonal bath and the $\mathbf{k}$-adjacent basis. For the $\mathbf{k}$-adjacent basis, the average median error rates were: readout-error $\num{1.33e-2}$, single-qubit error $\num{2.40e-4}$ and two-qubit gate-error $\num{2.73e-3}$, with the latter two characterized by randomized benchmarking. The average relaxation and dephasing times were $138.5 \ \mu \text{s}$ and $101.25 \ \mu \text{s}$. Likewise, for the basis of diagonal-bath circuits had readout-error $\num{1.48e-2}$, single-qubit error $\num{2.60e-4}$ and two-qubit gate-error $\num{2.77e-3}$. $T_1=131 \ \mu \text{s}$ and $T_2 = 89.25 \ \mu \text{s}$ were slightly lower for these experiments than for the $\mathbf{k}$-adjacent basis ones. The number of two-qubit gates grew linearly with Trotter steps/Krylov dimension with a slope of 312 and 208 for the momentum and $\mathbf{k}$-adjacent basis respectively. The maximum number of two-qubit gates here were 5976 gates and 3281 gates respectively. For $K=41$, we only considered the $\mathbf{k}$-adjacent basis. For these runs, we had average $T_1=131.5 \ \mu \text{s}$ and $T_2 = 96.25 \ \mu \text{s}$. The error rates were: readout-error $\num{1.53e-2}$, single-qubit error $\num{2.6e-4}$ and two-qubit gate-error $\num{2.80e-3}$. The two-qubit gates increased linearly at a rate of 337 gates per Trotter step and the maximum number of two-qubit gates was 6153.  

For the 4-impurity Anderson model circuits, 28 bath sites were considered. Consequently, the number of qubits was 64. The number of two-qubit gates per Trotter step grew at the rate of 613 gates per step, signifying an increase in the density of entangling operations in comparison to the SIAM circuits.  The average relaxation and dephasing times were $T_1 = 155.0 \ \mu\text{s}$ and $T_1 = 101.3 \ \mu\text{s}$ respectively. The error rates were: readout-error $8 \times 10^{-3}$, single-qubit gate error $2.2 \times 10^{-4}$ and two-qubit gate error $3.8 \times 10^{-3}$. The maximum number of gates used was 5222. 

\subsection{Configuration recovery with carry-over}
In the self-consistent configuration recovery scheme introduced in Ref.~\cite{ibm2024chemistry}, the only information that is passed in between self consistent iterations in the average fermionic mode occupancies in the approximated ground state found by SQD and SKQD. We realize that after finding a preliminary approximation to the ground state, additional information may be used to inform the subsequent diagonalization steps. 

In particular, if a basis state $\left|b_j\right\rangle$ is identified as relevant in the description of the ground state at the previous recovery iteration ($\left|\mathbf{\Psi}_j\right|>\tau$), then the state $\left|b_j\right\rangle$ may be included as part of the definition subspaces in subsequent recovery iterations. $\tau$ is a threshold used to determine which basis states have an impact in the the description of the ground state. This strategy is only used in the 4-impurity experiments in this manuscript. We set the value of the threshold to $\tau = 10^{-8}$.


\newpage
\clearpage

\onecolumngrid


\renewcommand{\thefigure}{S\arabic{figure}}
\renewcommand{\thetable}{S\arabic{table}}
\renewcommand{\theequation}{S\arabic{equation}}
\renewcommand{\thepage}{S\arabic{page}}
\setcounter{figure}{0}
\setcounter{table}{0}
\setcounter{section}{0}
\setcounter{equation}{0}
\setcounter{page}{1} 

\begin{center}
\section*{Supplementary Information for: Quantum-Centric Algorithm for Sample-Based Krylov Diagonalization}
\end{center}

\section{Krylov Quantum Diagonalization}\label{sec:kqd}

Let $\mathcal{H}$ denote an $N$-dimensional Hilbert space, where $N=2^n$ and $n$ denotes the number of qubits. Let $H$ be a Hermitian operator on $\mathcal{H}$ with eigenvalues $E_0 \le E_1 \le \dots \le E_{N-1}$ and corresponding orthonormal eigenstates $\ket{\phi_0}, \dots, \ket{\phi_{N-1}}$. Our goal is to estimate $E_0$. In general, the eigenvalue problem for an $N$-dimensional operator is computationally challenging, so we approximate $E_0$ by considering the corresponding eigenvalue problem in a subspace. In particular, given some initial state $\ket{\psi_0}$, consider the \emph{Krylov subspace} $\mathcal{S}$ spanned by
\begin{equation}\label{eq:krylov-basis}
\ket{\psi_k} \coloneq e^{-ikH\,\Delta{t}} \ket{\psi_0}, \quad k \in \{0, 1, \dots, d-1\},
\end{equation}
for some chosen time step $\Delta{t}$ and dimension $d$.

We then solve the following generalized eigenvalue problem
\begin{equation}
\label{eq:gen-eig-app}
\mathbf{\widetilde{H}} v = \widetilde{E} \mathbf{\widetilde{S}} v,
\end{equation}
where $v$ is a coordinate vector corresponding to the state ${\ket{\phi}\coloneq\sum_{k=0}^{d-1} v_k\ket{\psi_k} \in \mathcal{S}}$, and 
\begin{equation}
\label{eq:gen-eig-mats-app}
\mathbf{\widetilde{H}}_{jk} \coloneq \braket{\psi_j | H | \psi_k}, \qquad \mathbf{\widetilde{S}}_{jk} \coloneq \braket{\psi_j | \psi_k}.
\end{equation}

\medskip 

Let 
\begin{equation}\label{eq:delta_ej}
    \Delta E_j = E_j - E_0
\end{equation}
for each $0 < j < N$ and let 
\begin{align}\label{eq:ref-state}
    \ket{\psi_0} = \sum_{k=0}^{N-1} \gamma_k \ket{\phi_k}
\end{align}
be the eigenstate decomposition of the initial state. For completeness, we recall the accuracy of estimating the ground state energy using the Krylov quantum diagonalization approach from \cite{Epperly_2022}. 

\begin{theorem}[Theorem 3.1 in \cite{Epperly_2022}]
\label{thm:krylov}
Let $\Delta{t} = \frac{\pi}{\Delta E_{N-1}}$ and $d$ denote the Krylov dimension as defined in \cref{eq:krylov-basis}. Assume that $d$ is odd; for even $d$ we can use the bound for $d-1$. Then the approximate ground state energy $\widetilde{E}_0$ obtained from the method described in Eqs.~\eqref{eq:krylov-basis}--\eqref{eq:gen-eig-mats-app} satisfies
\begin{equation} 
\label{eq:krylov-error}
0 \le \widetilde{E}_0 - E_0 \le 8 \Delta E_{N-1} \left(\frac{1 - \abs{\gamma_0}^2}{\abs{\gamma_0}^2}\right) \parens{1 + \frac{\pi \Delta E_1}{\Delta E_{N-1}}}^{-(d-1)},
\end{equation}
where $\gamma_0$ denotes the coefficient of $\ket{\phi_0}$ in \cref{eq:ref-state} and $\Delta E_j$ is defined in \cref{eq:delta_ej}.
\end{theorem}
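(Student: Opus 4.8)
The plan is to prove \cref{thm:krylov} by the standard Rayleigh--Ritz argument combined with a Chebyshev polynomial filter. Write $U = e^{-iH\Delta t}$, so that the Krylov subspace of \cref{eq:krylov-basis} is exactly $\{p(U)\ket{\psi_0} : \deg p \le d-1\}$. Since $\widetilde E_0$ is the smallest generalized eigenvalue of \cref{eq:gen-eig-app}, the Courant--Fischer characterization of Ritz values gives, for \emph{every} polynomial $p$ of degree at most $d-1$,
\[
0 \;\le\; \widetilde E_0 - E_0 \;\le\; \frac{\bra{\psi_0}\, p(U)^\dagger (H - E_0)\, p(U)\,\ket{\psi_0}}{\bra{\psi_0}\, p(U)^\dagger p(U)\,\ket{\psi_0}},
\]
where the left inequality holds because any Ritz value lies above the true minimum $E_0$. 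Thus the whole problem reduces to exhibiting a single good filter polynomial $p$ and estimating the right-hand side.

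Next I would diagonalize using $\ket{\psi_0} = \sum_j \gamma_j \ket{\phi_j}$ and $U\ket{\phi_j} = z_j\ket{\phi_j}$ with $z_j = e^{-iE_j\Delta t}$. The quotient becomes $\big(\sum_{j\ge1}|\gamma_j|^2 |p(z_j)|^2 (E_j-E_0)\big)\big/\big(\sum_j |\gamma_j|^2 |p(z_j)|^2\big)$, the $j=0$ term dropping from the numerator because $E_0-E_0=0$. Bounding $E_j - E_0 \le \Delta E_{N-1}$ in the numerator, keeping only the $j=0$ term in the denominator, and using $\sum_{j\ge1}|\gamma_j|^2 = 1-|\gamma_0|^2$, one obtains
\[
\widetilde E_0 - E_0 \;\le\; \Delta E_{N-1}\,\frac{1-|\gamma_0|^2}{|\gamma_0|^2}\,\frac{\max_{j\ge1}|p(z_j)|^2}{|p(z_0)|^2}.
\]
What remains is a purely polynomial extremal problem: make $|p(z_0)|$ large while keeping $|p(z_j)|$ small for all $j\ge1$.

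The choice $\Delta t = \pi/\Delta E_{N-1}$ is what makes the geometry clean: the phases $\theta_j = E_j\Delta t$ fill an arc of total width $\theta_{N-1}-\theta_0 = \pi$, with the ground-state phase at one end and every excited phase at distance at least $g := \pi\Delta E_1/\Delta E_{N-1}$ from it. Passing to $w_j = z_j/z_0 = e^{-i\phi_j}$ with $\phi_j = \theta_j-\theta_0 \in [0,\pi]$ and writing $p(z) = z_0^{-m}\,z^{m}\, q\!\left(\tfrac{w+w^{-1}}{2}\right)$ with $m = (d-1)/2$ turns $|p(z_j)|$ into $|q(\cos\phi_j)|$ for a real degree-$m$ polynomial $q$; here $m$ is an integer precisely because $d$ is assumed odd, which is the role of that hypothesis. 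Under $x=\cos\phi$ the ground state sits at $x_0=1$ while all excited states satisfy $x_j \in [-1,\cos g]$, so I would take $q = T_m\circ\ell$, with $\ell$ the affine map sending $[-1,\cos g]$ onto $[-1,1]$ and $T_m$ the Chebyshev polynomial. This gives $|q(x_j)|\le1$ for $j\ge1$, while $T_m(x)=\cosh(m\,\mathrm{arccosh}\,x)$ grows exponentially at $x_0=1$.

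The main obstacle, and essentially the only computational step, is estimating $q(1)=T_m(\ell(1))$ and converting it into the advertised factor. Using $1-\cos g = 2\sin^2(g/2)$ one simplifies $\ell(1)+\sqrt{\ell(1)^2-1} = (1+\sin(g/2))/(1-\sin(g/2))$ exactly, whence $|p(z_0)|^2 \ge \tfrac14\big(\tfrac{1+\sin(g/2)}{1-\sin(g/2)}\big)^{d-1}$; together with $\max_{j\ge1}|p(z_j)|^2\le1$ this yields the bound with $\big(\tfrac{1-\sin(g/2)}{1+\sin(g/2)}\big)^{d-1}$ in place of $(1+g)^{-(d-1)}$. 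The proof then closes with the elementary inequality $\tfrac{1-\sin(g/2)}{1+\sin(g/2)} \le (1+g)^{-1}$ for $g\in[0,\pi]$, the constant $8$ (rather than the $4$ my crude accounting produces) absorbing the slack in this last inequality and in the arc-versus-chord replacement. I expect verifying this elementary inequality and tracking the absolute constant to be the fiddliest part, with all the conceptual content residing in the Rayleigh--Ritz reduction and the Chebyshev filter construction.
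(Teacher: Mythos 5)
Your proposal is correct and follows essentially the same route as the paper's proof (which is itself a summary of Epperly \emph{et al.}): the Rayleigh--Ritz variational reduction, the odd-$d$ centering trick (your factoring of $p(z)$ through $z^m$ times a symmetric Laurent polynomial is precisely the paper's passage to the shifted Krylov space), and the Chebyshev filter $T_m\circ\ell$ evaluated at $\cos\theta$, which is exactly the paper's trigonometric polynomial $p^*$ up to the normalization $p^*(0)=1$. The only differences are cosmetic: you carry out the Chebyshev endpoint estimate and the elementary inequality $\tfrac{1-\sin(g/2)}{1+\sin(g/2)}\le(1+g)^{-1}$ explicitly (which the paper leaves to the cited reference) and thereby obtain the constant $4$ in place of $8$, a correct sharpening rather than a gap.
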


Note that the accuracy in approximating $E_0$ improves exponentially with the dimension $d$ of the Krylov subspace. Moreover, the accuracy is inversely proportional to the overlap $(|\gamma_0|^2)$ of the initial state with the ground state of $H$. We summarize proof steps for \cref{eq:krylov-error} below.

\begin{itemize}
\item First, note that if $d$ is odd, then the projection of the Hamiltonian into the Krylov subspace as we have defined it is the same as the projection of the Hamiltonian into the shifted Krylov space spanned by
\begin{equation}\label{eq:krylov-basis-shifted}
\ket{\psi_k} = e^{-ikH\,\Delta{t}} \ket{\psi_0}, \quad k \in \braces{-\frac{d-1}{2}, -\frac{d-1}{2}+1, \dots, \frac{d-1}{2}-1, \frac{d-1}{2}}.
\end{equation}
This follows because the Gram matrices $\widetilde{\textbf{S}}$ of the two spaces are identical since time evolutions are unitary, and the Hamiltonian projection matrices $\widetilde{\textbf{H}}$ are identical because the time evolutions commute with the full Hamiltonian $H$.
\item For any $0 < a < \pi$ and positive integer $d$, there exists a trigonometric polynomial $p^*$ of degree $d_\text{poly}=\frac{d-1}{2}$ satisfying $p^*(0) = 1$ and
\begin{equation}
\label{eq:trig-poly-bound}
\abs{p^*(\theta)} \le 2(1+a)^{-d_\text{poly}}\quad \forall \theta \in (-\pi, \pi) \setminus (-a, a).
\end{equation}
This polynomial is explicitly constructed as
\begin{equation}
p^*(\theta) = \frac{T_k \parens{1 + 2\frac{\cos\theta - \cos a}{\cos a + 1}} } {T_k\parens{1 + 2 \frac{1 - \cos a}{\cos a + 1}} },
\end{equation}
where $T_k$ is the $k$th-Chebyshev polynomial.

\item Let the Fourier expansion of $p^*$ be
\begin{equation}
p^*\parens{(E - E_0) \Delta t} = \sum_{k=-d_\text{poly}}^{d_\text{poly}} c_k e^{ik E \Delta{t}}.
\end{equation}
Consider the following unnormalized state:
\begin{equation}\label{eq:ansatz_shifted}
\ket{\widetilde\phi_K} = \sum_{k=-d_\text{poly}}^{d_\text{poly}} c_k \ket{\psi_k} = \sum_{k=-d_\text{poly}}^{d_\text{poly}} \sum_{j=0}^{N-1} c_k \gamma_j e^{ik E_j \Delta{t}} \ket{\phi_j}.
\end{equation}
Moreover, from Section~3.1 in \cite{Epperly_2022} we recall that $\sum_k |c_k|^2 \leq 1$. 

\item
The norm of $\ket{\widetilde\phi_K}$ is given by
\begin{equation}
\label{eq:krylov-norm}
\abs{\braket{\widetilde\phi_K|\widetilde\phi_K}}^2 = \sum_{j=0}^{N-1} \abs{\gamma_j}^2 \abs{\sum_{k=-d_\text{poly}}^{d_\text{poly}} c_k e^{ik E_j \Delta{t}}}^2
\ge \abs{\gamma_0}^2 \abs{\sum_{k=-d_\text{poly}}^{d_\text{poly}} c_k e^{ik E_0 \Delta{t}}}^2
= \abs{\gamma_0}^2 \abs{p^*(0)}^2
= \abs{\gamma_0}^2.
\end{equation}

\item Then the energy error of $\ket{\widetilde\phi_K}$ is given by 
\begin{equation}\label{eq:energy_error_shifted}
\frac{\braket{\widetilde\phi_K| (H - E_0) |\widetilde\phi_K}}{\braket{\widetilde\phi_K|\widetilde\phi_K}} 
\le \frac{\sum_{j=0}^{N-1} (E_j-E_0) \abs{\gamma_i}^2 \abs{\sum_{k=-d_\text{poly}}^{d_\text{poly}} e^{ikE_j \Delta{t}}}^2} {\braket{\widetilde\phi_K|\widetilde\phi_K}}
\le \frac{\sum_{j=0}^{N-1} E_j \abs{\gamma_i}^2 \abs{p^*((E_j-E_0)\Delta{t})}^2} {\abs{\gamma_0}^2}.
\end{equation}
Applying (\ref{eq:trig-poly-bound}) with $a = (E_1 - E_0) \Delta{t} = \frac{\pi \Delta E_1}{\Delta E_{N-1}}$ gives
\begin{equation}
p^*((E_j-E_0)\Delta{t}) \le 2\parens{1 + \frac{\pi\Delta{E_1}}{\Delta{E_{N-1}}}}^{-d}
\end{equation}
for each $1 \le j \le N-1$. The $j=0$ term cancels upon taking the difference with $E_0$ for the energy error.

\item Since $\ket{\widetilde\phi_K}$ is explicitly defined in \cref{eq:ansatz_shifted} as an element of the shifted Krylov space \cref{eq:krylov-basis-shifted}, the lowest energy in the shifted Krylov space is upper bounded by the energy of $\ket{\widetilde\phi_K}$.
Finally, as we noted above, the projection of the Hamiltonian into the shifted Krylov space is identical to its projection into the original, unshifted Krylov space, so the lowest energies of the two projections are the same.
Hence, the energy error from finding the lowest energy of the projected Hamiltonian in our Krylov subspace is upper bounded by the energy error \cref{eq:energy_error_shifted} of $\ket{\widetilde\phi_K}$.

\end{itemize}

Thus, for an initial state with a nontrivial ground state overlap $\abs{\gamma_0}^2 = \Theta(1)$ and for a Hamiltonian $H$ that has a well-behaved spectrum, (i.e., $\Delta E_{N-1}$ not growing too quickly and $\Delta E_1$ not too small), a constant error $\widetilde{E}_0 - E_0 \le \eps$ can be achieved by taking $d = O(\log(1/\eps))$.

\section{Proofs and Relevant Details for Theorem~1}
\label{app:proofs}

In this section we prove the performance guarantees for the Krylov diagonalization via quantum unitary sampling approach. As discussed in the main text, we prepare $d$ different states on a quantum computer and collect $M$ samples from each of them by measuring in the computational basis. In particular, let $\ket{\psi_0}$ denote the initial reference state. Then we get $M$ samples from each Krylov basis state $\ket{\psi_k}$ as defined in \cref{eq:krylov-basis}. In particular, we get a sequence of bitstrings $\{a_{km}\}_{k=0}^{M-1}$ for each $k$. Finally, we solve the eigenvalue problem in the subspace spanned by $\{a_{km}~|~k=0,1,\dots,d-1;m=0,1,\dots,M-1\}$.

\medskip

We analyze our algorithm as follows.
We first recall \cref{def:concentration} for convenience:

\noindent
\textbf{\cref{def:concentration}}~($(\alpha_L, \beta_L)$-sparsity).
For any state $\ket{\psi}$, let
\begin{equation}
\label{eq:ground-state-amp-app}
    \ket{\psi} = \sum_{j=1}^Ng_j\ket{b_j},
\end{equation}
where $\{b_j\}$ is some ordering of length-$N$ bitstrings such that $|g_1|\geq |g_2|\geq \cdots |g_N|$.
We say that $\ket{\psi}$ exhibits \emph{$(\alpha_L, \beta_L)$-sparsity} on $\ket{b_1}$ through $\ket{b_L}$ if
\begin{equation}
\label{eq:alpha_def}
    \sum_{j=1}^L |g_j|^2 \geq \alpha_L
\end{equation}
and
\begin{equation}\label{eq:beta_def}
    |g_1|^2, \dots, |g_L|^2 \geq \beta_L.
\end{equation}

Since $|g_i|^2 \geq \beta_L, \forall i$, it implies that   $\alpha_L \ge L \beta_L$, but $\alpha_L$ may be much larger in general. Here, we separate the two as $\alpha_L$ will govern the rate of convergence for successful runs while $\beta_L$ will govern the probability of success.

Below, we restate \cref{mthm:skqd-thm} and prove it using several lemmas in \cref{sec:proof-main-thm}. 

\noindent
\textbf{Theorem \ref{mthm:skqd-thm}.}
\emph{
Let $H$ be a Hamiltonian whose ground state $\ket{\phi_0}$ exhibits $(\alpha_L^{(0)}, \beta_L^{(0)})$-sparsity.
Let $\ket{\widetilde\phi}$ be the lowest energy state supported on the $L$ important bitstrings in $\ket{\phi_0}$.
The error in estimating the ground state energy of $H$ is bounded by
\[
\braket{\widetilde\phi | H | \widetilde\phi} - \braket{\phi_0 | H | \phi_0} \leq \sqrt{8} \norm{H} 
\left(1 - \sqrt{\alpha_L^{(0)}}\right)^{1/2},
\]
provided all $L$ important bitstrings are sampled. The success probability of sampling all $L$ important bitstrings is at least $1 - \eta$ as long as the number of samples from each Krylov basis state exceeds $\left(d^2 \log(L/\eta)\right)/\left(|\gamma_0|^2 (\beta_L^{(0)} - 2\sqrt{\tilde{\varepsilon}})\right)$, where
\[
\tilde{\varepsilon} = 2-2\sqrt{1-\varepsilon/\Delta E_1}
\]
with $\varepsilon$ defined in \cref{meq:eps-kqd}.
}

As discussed in \cref{sec:kqd}, we choose $d$ according to (\ref{eq:krylov-error}) so that the error for the Krylov quantum diagonalization approach is bounded by $\eps$.
If we were to run the Krylov method on the $\{\ket{\psi_k}\}$, we would obtain a state $\ket{\psi} \in \mathcal{S}$ with 
\begin{equation}
\label{eq:krylov-eps}
    0\le\braket{\psi | H | \psi} - \braket{\phi_0 | H | \phi_0} = \widetilde{E}_0 - E_0 < \eps.
\end{equation}
By the proof of Theorem \ref{thm:krylov}, there exists $\ket{\widetilde\phi_K} \in \mathcal{S}$ defined by \eqref{eq:ansatz_shifted}, which is not necessarily the lowest energy state in $\mathcal{S}$, but which satisfies \eqref{eq:krylov-error} and also satisfies the following inequality :
\begin{equation}\label{eq:coeff-property}
\ket{\widehat\phi_K} = \sum_{k=0}^{d-1} d_k \ket{\psi_k}, \qquad \abs{d_k} \le \frac{1}{|\gamma_0|},
\end{equation}
where $\ket{\widehat\phi_K}$ is the normalized version of $\ket{\widetilde\phi_K}$ in \cref{eq:ansatz_shifted}. Therefore, $d_k = c_k/|\langle\widetilde\phi_K|\widetilde\phi_K\rangle|$, which implies that $|d_k| = |c_k|/|\langle\widetilde\phi_K|\widetilde\phi_K\rangle| \leq 1/|\gamma_0|$, where we invoked the condition that $\sum_k |c_k|^2 \leq 1$ (from Section~3.1 in \cite{Epperly_2022}) and $|\langle\widetilde\phi_K|\widetilde\phi_K\rangle|\geq |\gamma_0|$.

Note that to prove \cref{eq:krylov-eps}, we do not actually need to solve the generalized eigenvalue problem to obtain $\ket{\psi}$, but only to have a state satisfying \eqref{eq:krylov-error}, so we will use $\ket{\widetilde\phi_K}$  for our analysis due to its additional structure as in \cref{eq:coeff-property}.

\medskip

In \cref{lem:state-close}, we prove that a state that has a low energy error with respect to the ground state energy is also close to the ground state in 2-norm distance. We then show in \cref{lem:state-conc-close} that if $\ket{\phi_0}$ exhibits sparsity and if $\ket{\psi}$ is close to $\ket{\phi_0}$, then $\ket{\psi}$ also exhibits sparsity.

\begin{lemma}[A state with low energy is close to the ground state]\label{lem:state-close}

Let $H$ be a Hamiltonian with ground state $\ket{\phi_0}$ and spectral gap $\Delta E_1$. If $\ket{\psi}$ is a state such that $\braket{\psi | H | \psi} - \braket{\phi_0 | H | \phi_0} < \eps$ and $\braket{\psi|\phi_0}$ is real, then
\begin{equation}
\norm{\ket{\psi} - \ket{\phi_0}}^2 < 2\left(1 - \sqrt{1 - \frac{\eps}{\Delta E_1}}\right) = O\parens{\frac{\eps}{\Delta E_1}}.
\end{equation}

\end{lemma}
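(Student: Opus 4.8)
The plan is to work in the eigenbasis of $H$, convert the energy-error hypothesis into a lower bound on the ground-state overlap, and then translate that overlap bound into the claimed $2$-norm estimate. First I would write $\ket{\psi} = \sum_{j=0}^{N-1} \gamma_j \ket{\phi_j}$, so that the energy error takes the form
\begin{equation}
\braket{\psi|H|\psi} - E_0 = \sum_{j=0}^{N-1} \abs{\gamma_j}^2 (E_j - E_0) = \sum_{j \geq 1} \abs{\gamma_j}^2 \Delta E_j \geq \Delta E_1 \sum_{j \geq 1} \abs{\gamma_j}^2 = \Delta E_1 \parens{1 - \abs{\gamma_0}^2},
\end{equation}
where the inequality uses $\Delta E_j \geq \Delta E_1$ for all $j \geq 1$ and the final equality uses normalization of $\ket{\psi}$. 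The hypothesis $\braket{\psi|H|\psi} - E_0 < \eps$ then yields $1 - \abs{\gamma_0}^2 < \eps/\Delta E_1$, i.e. $\abs{\gamma_0} > \sqrt{1 - \eps/\Delta E_1}$, where taking the square root is valid in the relevant regime $\eps < \Delta E_1$.

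Next I would expand the distance directly. Since $\braket{\psi|\phi_0} = \gamma_0^*$, we have
\begin{equation}
\norm{\ket{\psi} - \ket{\phi_0}}^2 = \braket{\psi|\psi} + \braket{\phi_0|\phi_0} - 2\,\mathrm{Re}\braket{\psi|\phi_0} = 2 - 2\,\mathrm{Re}(\gamma_0).
\end{equation}
The hypothesis that $\braket{\psi|\phi_0}$ is real makes $\gamma_0$ real; fixing the (free) global phase of $\ket{\phi_0}$ so that $\gamma_0 = \abs{\gamma_0} \geq 0$ reduces this to $2 - 2\abs{\gamma_0}$. Combining with the overlap bound from the previous step gives
\begin{equation}
\norm{\ket{\psi} - \ket{\phi_0}}^2 = 2 - 2\abs{\gamma_0} < 2\left(1 - \sqrt{1 - \frac{\eps}{\Delta E_1}}\right),
\end{equation}
which is exactly the stated estimate.

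Finally, for the asymptotic claim I would use the elementary identity $1 - \sqrt{1-x} = x/(1 + \sqrt{1-x}) \leq x$ for $x = \eps/\Delta E_1 \in [0,1]$, so that $2(1 - \sqrt{1 - \eps/\Delta E_1}) \leq 2\eps/\Delta E_1 = O(\eps/\Delta E_1)$. I do not expect a genuine obstacle in this lemma; the only point requiring care is the phase convention. The realness hypothesis on $\braket{\psi|\phi_0}$ is what allows me to replace $\mathrm{Re}\braket{\psi|\phi_0}$ by $\abs{\gamma_0}$, and one must additionally orient the phase of $\ket{\phi_0}$ so that this overlap is nonnegative; otherwise the distance could be as large as $2 + 2\abs{\gamma_0}$. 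This is a harmless choice because $\ket{\phi_0}$ is defined only up to an overall phase, and it is implicit in how $\ket{\widetilde\phi_K}$ enters the surrounding argument.
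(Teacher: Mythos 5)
Your proof is correct and takes essentially the same route as the paper's: both convert the energy hypothesis into a lower bound on the ground-state overlap via the spectral gap (the paper uses the two-component split $\ket{\psi} = \chi_0\ket{\phi_0} + \chi^\perp\ket{\phi^\perp}$, you use the full eigenbasis, which is the same estimate), and then expand $\norm{\ket{\psi}-\ket{\phi_0}}^2 = 2 - 2\,\mathrm{Re}\braket{\psi|\phi_0}$. Your explicit treatment of the phase convention (fixing the global phase of $\ket{\phi_0}$ so that the real overlap is nonnegative) is a point the paper leaves implicit when it writes $2 - 2\chi_0$, but the argument is otherwise identical.
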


\begin{proof}
Let $\ket{\psi} = \chi_0 \ket{\phi_0} + \chi^\perp\ket{\phi^\perp}$, where $\ket{\phi^\perp}$ is normalized and orthogonal to $\ket{\phi_0}$, such that ${\abs{\chi_0}^2 + \abs{\chi^\perp}^2 = 1}$. We can write $\braket{\psi | H | \psi} - \braket{\phi_0 | H | \phi_0} < \eps$ as
\begin{align}
\eps > \abs{\chi_0}^2 E_0 + \abs{\chi^\perp}^2 \braket{\phi^\perp | H | \phi^\perp} - E_0 
& = (\abs{\chi_0}^2 - 1) E_0 + (1 - \abs{\chi_0}^2) \braket{\phi^\perp | \mathcal{H} | \phi^\perp} \\
& \ge (1 - \abs{\chi_0}^2) \Delta E_1.
\end{align}
Thus $\abs{\chi_0}^2 > 1 - \frac{\eps}{\Delta E_1}$, giving
\begin{equation}
\norm{\ket{\psi} - \ket{\phi_0}}^2 = \abs{\chi_0 - 1}^2 + \abs{\chi^\perp}^2 
= 2 - 2\chi_0
< 2 - 2\sqrt{1 - \frac{\eps}{\Delta E_1}}.
\end{equation}
Expanding to the first order in $\frac{\eps}{\Delta E_1}$ gives an error bound of $\frac{\eps}{\Delta E_1} + O(\eps^2)$.
\end{proof}

\begin{lemma}\label{lem:state-conc-close}
If $\ket{\phi_0}$ exhibits $(\alpha_L^{(0)}, \beta_L^{(0)})$-sparsity and $\norm{\ket{\psi} - \ket{\phi_0}}^2 < \eps$, then $\ket{\psi}$ exhibits $(\alpha_L, \beta_L)$-sparsity for
\begin{equation}
\label{eq:modified_alpha_beta}
\alpha_L = \alpha_L^{(0)} - 2\sqrt{\eps} \qquad \beta_L = \beta_L^{(0)} - 2\sqrt{\eps}.
\end{equation}
\end{lemma}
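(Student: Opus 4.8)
The plan is to prove the two sparsity conditions for $\ket\psi$ directly, using the top-$L$ bitstrings of $\ket{\phi_0}$ as an explicit witness set rather than trying to track $\ket\psi$'s own ordering. Fix the computational basis and write $\ket{\phi_0}=\sum_j g_j^{(0)}\ket{b_j}$ and $\ket\psi=\sum_j g_j\ket{b_j}$ in that common basis, with bitstrings labeled so that $S=\{b_1,\dots,b_L\}$ are the $L$ largest-weight bitstrings of $\ket{\phi_0}$, i.e. the ones witnessing its $(\alpha_L^{(0)},\beta_L^{(0)})$-sparsity. The single quantitative input I need is that closeness in $2$-norm controls every individual amplitude: for each $j$, $\bigl||g_j|-|g_j^{(0)}|\bigr|\le|g_j-g_j^{(0)}|\le\norm{\ket\psi-\ket{\phi_0}}<\sqrt\eps$, by the reverse triangle inequality together with the fact that a single coordinate difference is dominated by the full vector norm.

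First I would establish the per-amplitude ($\beta_L$) bound. For each $j\in S$ we have $|g_j^{(0)}|\ge\sqrt{\beta_L^{(0)}}$, hence $|g_j|>\sqrt{\beta_L^{(0)}}-\sqrt\eps$ and so $|g_j|^2>\beta_L^{(0)}-2\sqrt{\beta_L^{(0)}}\sqrt\eps+\eps\ge\beta_L^{(0)}-2\sqrt\eps=\beta_L$, where I used $\beta_L^{(0)}\le 1$ to discard the cross term favorably. This produces $L$ distinct bitstrings of $\ket\psi$ whose squared amplitudes all exceed $\beta_L$, so the $L$-th largest squared amplitude of $\ket\psi$ must itself exceed $\beta_L$; this is exactly the statement that $\ket\psi$'s own top-$L$ amplitudes each satisfy $|g|^2\ge\beta_L$.

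Next I would establish the aggregate ($\alpha_L$) bound. Let $P_S$ be the orthogonal projector onto $\mathrm{span}\{\ket{b_j}\}_{j\in S}$. Then $\norm{P_S\ket\psi}\ge\norm{P_S\ket{\phi_0}}-\norm{P_S(\ket\psi-\ket{\phi_0})}\ge\sqrt{\alpha_L^{(0)}}-\sqrt\eps$, so that $\sum_{j\in S}|g_j|^2=\norm{P_S\ket\psi}^2\ge\alpha_L^{(0)}-2\sqrt\eps=\alpha_L$, again using $\alpha_L^{(0)}\le 1$. Since the total weight carried by $\ket\psi$'s own top-$L$ bitstrings equals the maximum of $\sum_{j\in T}|g_j|^2$ over all size-$L$ index sets $T$, it is at least the weight on $S$, and the first sparsity inequality of \cref{def:concentration} follows.

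The main subtlety worth flagging is that the top-$L$ bitstrings of $\ket\psi$ need not coincide with those of $\ket{\phi_0}$, while the definition of $(\alpha_L,\beta_L)$-sparsity is stated in terms of $\ket\psi$'s own ordering. Using $S$ only as a witness circumvents this: the $\beta_L$ step converts ``$L$ bitstrings above threshold'' into a bound on the $L$-th order statistic, and the $\alpha_L$ step invokes the max-over-subsets characterization of the top-$L$ weight. The only remaining care concerns the degenerate regime of large $\eps$: if $\beta_L^{(0)}-2\sqrt\eps<0$ or $\alpha_L^{(0)}-2\sqrt\eps<0$ the corresponding inequality is vacuous, and otherwise $\beta_L^{(0)},\alpha_L^{(0)}\ge 2\sqrt\eps\ge\eps$ guarantees that the square roots are real and the $(\sqrt{\cdot}-\sqrt\eps)$ terms are nonnegative, so squaring preserves the inequalities. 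I expect no deeper obstacle: the statement is essentially a first-order perturbation estimate, and all the effort lies in matching it to the precise form of \cref{def:concentration}.
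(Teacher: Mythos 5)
Your proof is correct and follows essentially the same route as the paper's: both work in the fixed computational basis, take the ground state's top-$L$ bitstrings as the witness set, and derive the $2\sqrt{\eps}$ degradation from first-order perturbation estimates (your reverse-triangle-inequality-then-square steps for $\beta_L$ and projector bound for $\alpha_L$ are equivalent to the paper's expand-and-Cauchy--Schwarz manipulations). The only differences are that you explicitly handle the mismatch between $\ket{\psi}$'s own top-$L$ ordering and the witness set $S$, as well as the vacuous regime $\beta_L^{(0)} < 2\sqrt{\eps}$, both of which the paper's proof leaves implicit.
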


\begin{proof}
Expand $\ket{\psi}$ and $\ket{\phi_0}$ in the computational basis as $\ket{\psi} = \sum_{j=1}^N a_j \ket{b_j}$ and $\ket{\phi_0} = \sum_{j=1}^N c_j \ket{b_j}$, respectively. Then
\begin{align}
\sum_{j=1}^L \abs{a_j}^2 = \sum_{j=1}^L \abs{c_j + (a_j - c_j)}^2 &= \sum_{j=1}^L \abs{c_j}^2 + \sum_{j=1}^L \abs{a_j - c_j}^2 + 2\sum_{j=1}^L \Re\parens{c_j (a_j - c_j)} \\
& \ge \alpha_L^{(0)} + 0 - 2 \sum_{j=1}^L \abs{c_j} \abs{a_j - c_j} \\
& \ge \alpha_L^{(0)} - 2 \sqrt{\sum_j \abs{c_j}^2 \sum_k \abs{a_k - c_k}^2} \\
& \ge \alpha_L^{(0)} - 2\sqrt{\eps},
\end{align}
showing that we may take $\alpha_L = \alpha_L^{(0)} - 2\sqrt{\eps}$. In the last line we replaced $\sum_{j=1}^L |c_j|^2 \leq 1$ and used the fact that $\norm{\ket{\psi} - \ket{\phi_0}}^2 < \eps$.

We now prove a similar condition for $\beta_L$. Since $\norm{\ket{\psi} - \ket{\phi_0}}^2 < \eps$, we must have $\abs{a_j - c_j} < \sqrt{\eps}$ for each $1 \le j \le N$. 
Then for each $1 \le j \le L$, we have
\begin{align}
\abs{a_j}^2 = \abs{c_j + (a_j - c_j)}^2 &= \abs{c_j}^2 + \abs{a_j - c_j}^2 + 2 \Re(c_j (a_j - c_j)) \\
&\ge \beta_L^{(0)} + 0 - 2\abs{c_j} \abs{a_j - c_j} \\
&\ge \beta_L^{(0)} - 2\sqrt{\eps}.
\end{align}
Thus we may take $\beta_L = \beta_L^{(0)} - 2\sqrt{\eps}$.
\end{proof}

We have shown in \cref{lem:state-conc-close} that if the Krylov quantum diagonalization (KQD) approach has $\varepsilon$ energy error and if the true ground state exhibits $(\alpha_L^{(0)}, \beta_L^{(0)})$-sparsity, then a solution to the KQD approach exhibits $(\alpha_L, \beta_L)$-sparsity for $\alpha_L, \beta_L$ given by \eqref{eq:modified_alpha_beta}. Using this result, we now prove in \cref{lem:nontrivial-bitstring} and \cref{lem:pfail} that with a high probability, each bitstring (among the $L$ important bitstrings) corresponding to the ground state $\ket{\phi_0}$ appears nontrivially in at least one of the Krylov basis states. This is crucial for establishing a bound on the error corresponding the Krylov diagonalization via quantum unitary sampling approach.

\begin{lemma}
[Each relevant bitstring appears nontrivially in at least one Krylov basis state]
\label{lem:conc-bitstring-prob} 
Let $\ket{\widehat\phi_K}$ satisfy $(\alpha_L, \beta_L)$-sparsity as in \cref{def:concentration} and $\ket{\widehat\phi_K} = \sum\limits_{k=0}^{d-1} d_k \ket{\psi_k}$ as in \cref{eq:coeff-property}. If each $\ket{\psi_k}$ is represented in the computational basis as
\(
\ket{\psi_k} = \sum_{j=1}^{N} c^{(k)}_{j} \ket{b_j}
\) for each $k=0,1,...,d-1$, then for each $j=0,1,\dots,L-1$ there exists some $c^{(k)}_j$ such that $\abs{c^{(k)}_{j}}^2 \ge p$ with
\begin{equation}
p = \frac{|\gamma_0|^2 \beta_L}{d^2}.
\end{equation}
Here, $|\gamma_0|^2$ denotes the overlap of the KQD initial state $\ket{\psi_0}$ with the ground state, as defined in \cref{eq:krylov-basis}. 
\end{lemma}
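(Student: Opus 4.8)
The plan is to exploit the sparsity condition on $\ket{\widehat\phi_K}$ together with the bound $|d_k| \le 1/|\gamma_0|$ from \cref{eq:coeff-property}, and to argue by contradiction (or equivalently, by a pigeonhole-style estimate) that no bitstring $\ket{b_j}$ with $j \le L$ can have small amplitude in \emph{every} Krylov state simultaneously.

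\medskip

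\textbf{Setup.} First I would fix an arbitrary index $j$ with $1 \le j \le L$ and expand the defining relation $\ket{\widehat\phi_K} = \sum_{k=0}^{d-1} d_k \ket{\psi_k}$ in the computational basis. Projecting onto $\ket{b_j}$ and using $\ket{\psi_k} = \sum_{i} c^{(k)}_i \ket{b_i}$, the amplitude of $\ket{b_j}$ in $\ket{\widehat\phi_K}$ is
\begin{equation}
\label{eq:amp-expansion}
\braket{b_j | \widehat\phi_K} = \sum_{k=0}^{d-1} d_k\, c^{(k)}_j.
\end{equation}
Because $\ket{\widehat\phi_K}$ satisfies $(\alpha_L,\beta_L)$-sparsity, the left-hand side has magnitude at least $\sqrt{\beta_L}$ for each $j \le L$, i.e.\ $\bigl|\sum_{k} d_k c^{(k)}_j\bigr|^2 \ge \beta_L$.

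\medskip

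\textbf{Main estimate.} The key step is to bound the right-hand side of \cref{eq:amp-expansion} in terms of the largest $|c^{(k)}_j|$. I would apply the triangle inequality followed by the bound $|d_k| \le 1/|\gamma_0|$:
\begin{equation}
\sqrt{\beta_L} \le \Bigl| \sum_{k=0}^{d-1} d_k\, c^{(k)}_j \Bigr| \le \sum_{k=0}^{d-1} |d_k|\, |c^{(k)}_j| \le \frac{1}{|\gamma_0|} \sum_{k=0}^{d-1} |c^{(k)}_j| \le \frac{d}{|\gamma_0|} \max_{k} |c^{(k)}_j|.
\end{equation}
Rearranging gives $\max_k |c^{(k)}_j| \ge |\gamma_0|\sqrt{\beta_L}/d$, so that for the maximizing index $k$ we obtain $|c^{(k)}_j|^2 \ge |\gamma_0|^2 \beta_L / d^2 = p$, which is exactly the claimed bound. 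Since $j$ was arbitrary in $\{1,\dots,L\}$, this establishes the existence of a suitable Krylov state for each of the $L$ important bitstrings.

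\medskip

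\textbf{Main obstacle.} The computation itself is routine; the subtlety I would be most careful about is the justification that $|\braket{b_j|\widehat\phi_K}| \ge \sqrt{\beta_L}$ for \emph{exactly} the first $L$ bitstrings of $\ket{\phi_0}$. The sparsity definition orders bitstrings by the amplitudes of the state in question, so a priori the $L$ largest-amplitude bitstrings of $\ket{\widehat\phi_K}$ need not coincide with the $L$ important bitstrings $\{b_j\}_{j=1}^L$ of $\ket{\phi_0}$. However, \cref{lem:state-conc-close} was proved precisely by tracking the \emph{same} index set (the first $L$ bitstrings of $\ket{\phi_0}$) and showing each retains amplitude at least $\beta_L^{(0)} - 2\sqrt{\tilde\varepsilon} = \beta_L$; thus I would invoke that lemma to guarantee the $\sqrt{\beta_L}$ lower bound holds on the intended index set, closing the logical gap. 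A minor additional point is the implicit assumption that $\beta_L > 0$, which requires $d$ large enough that $\tilde\varepsilon$ is small; this is consistent with the convergence regime established in \cref{sec:kqd}.
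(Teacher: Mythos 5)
Your proposal is correct and takes essentially the same route as the paper's proof: the identical chain of estimates (the sparsity lower bound $\abs{\braket{b_j|\widehat\phi_K}} \ge \sqrt{\beta_L}$, the triangle inequality, the bound $\abs{d_k} \le 1/\abs{\gamma_0}$ from \cref{eq:coeff-property}, and a pigeonhole step over the $d$ terms), merely written in the reverse direction. The subtlety you flag about the index set is resolved exactly as in the paper, which likewise justifies the $\sqrt{\beta_L}$ lower bound on the $L$ important bitstrings of $\ket{\phi_0}$ by citing \cref{lem:state-conc-close}.
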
\label{lem:nontrivial-bitstring}

\begin{proof}
Let $\ket{\widehat{\phi}_K} = \sum_{j=1}^N a_j \ket{b_j}$ in the computational basis. For each $1 \le j \le L$, we have
\begin{equation}\label{eq:coeff-large}
\sum_{k=0}^{d-1} \abs{c^{(k)}_{j}} \ge \sum_{k=0}^{d-1} |\gamma_0|\abs{d_k} \abs{c^{(k)}_{j}} 
\ge |\gamma_0|\abs{\sum_{k=0}^{d-1} d_k c^{(k)}_{j}} 
= |\gamma_0| \abs{a_j}
\ge |\gamma_0| \sqrt{\beta_L},
\end{equation}
where the first inequality follows from $|d_k||\gamma_0|\leq 1$, as in \cref{eq:coeff-property}, and
the last inequality follows from \cref{lem:state-conc-close}.
Therefore, \cref{eq:coeff-large} implies that there must be some $k$ for which $\abs{c^{(k)}_{j}} \ge \frac{|\gamma_0| \sqrt{\beta_L}}{d}$, or equivalently $\abs{c^{(k)}_{j}}^2 \ge \frac{|\gamma_0|^2 \beta_L}{d^2}$, as desired.
\end{proof}

\begin{lemma}\label{lem:pfail}
If we make $M$ measurements from each $\ket{\psi_k}$, the probability of not obtaining all $\ket{b_0},\ket{b_1},...,\ket{b_{L-1}}$ among the sampled bitstrings is bounded by
\begin{equation}
p_{\operatorname{fail}} \le L (1-p)^M \le L e^{-Mp}.
\end{equation}
\end{lemma}

\begin{proof}
The $M$ measurements are independent, so the probability of not obtaining a particular bitstring $\ket{b_j}$ for $j\in\{0,1,...,L-1\}$ is $(1-p)^M$. The probability of not obtaining at least one of the $L$ bitstrings follows from union bound.
\end{proof}

We now recall a result from \cite{ibm2024chemistry} relating the energy of a state defined in a subspace to the ground state energy on the full $n$-qubit space. 

\begin{lemma}[Appendix B.1 from \cite{ibm2024chemistry}]
\label{lem:conc-error-bound}
Let $\ket{\widetilde\phi} = \frac{1}{C} \sum_{j=0}^{L-1} c_j \ket{b_j}$, where $C = \sqrt{\sum_{j=0}^{L-1} \abs{c_j}^2}$ is the normalization constant and $\sum_{j=0}^{N-1} c_j \ket{b_j}$ defines the ground state in the computational basis in decreasing order of coefficient magnitude, as in \cref{def:concentration}. This state has energy close to the ground state energy, with difference bounded by
\begin{equation}
\braket{\widetilde\phi | H | \widetilde\phi} - \braket{\phi_0 | H | \phi_0} \le 2\sqrt{2} \norm{H} \parens{ 1 - \sqrt{\alpha_L^{(0)}} }^{1/2}.
\end{equation}
\end{lemma}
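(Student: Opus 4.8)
The plan is to recognize $\ket{\widetilde\phi}$ as the normalized orthogonal projection of the ground state $\ket{\phi_0}$ onto the coordinate subspace $\operatorname{span}\{\ket{b_0},\dots,\ket{b_{L-1}}\}$, and then to reduce the desired energy bound to a bound on the Euclidean distance $\norm{\ket{\widetilde\phi}-\ket{\phi_0}}$. The whole argument splits into a generic perturbation inequality for expectation values and an elementary distance computation that feeds the $\alpha_L^{(0)}$-sparsity hypothesis into it. I would carry these out in that order and then combine.

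First I would establish the generic inequality: for any two normalized states $\ket{\psi},\ket{\phi}$ and any bounded Hermitian $H$,
\begin{equation}
\abs{\braket{\psi|H|\psi}-\braket{\phi|H|\phi}}\le 2\norm{H}\,\norm{\ket{\psi}-\ket{\phi}}.
\end{equation}
This follows from the telescoping identity $\braket{\psi|H|\psi}-\braket{\phi|H|\phi}=\braket{\psi-\phi|H|\psi}+\braket{\phi|H|\psi-\phi}$ together with Cauchy--Schwarz applied to each term, using $\norm{\ket{\psi}}=\norm{\ket{\phi}}=1$ and the operator-norm bound $\abs{\braket{u|H|v}}\le\norm{H}\norm{u}\norm{v}$.

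Second I would supply the distance bound. Writing $\ket{\phi_0}=\sum_{j=0}^{N-1}c_j\ket{b_j}$ in decreasing order of $\abs{c_j}$, the definition gives $\ket{\widetilde\phi}=\tfrac{1}{C}\sum_{j=0}^{L-1}c_j\ket{b_j}$ with $C^2=\sum_{j=0}^{L-1}\abs{c_j}^2\ge \alpha_L^{(0)}$ by \cref{def:concentration}. The key computation is the overlap,
\begin{equation}
\braket{\widetilde\phi|\phi_0}=\frac{1}{C}\sum_{j=0}^{L-1}\abs{c_j}^2=C\ge\sqrt{\alpha_L^{(0)}},
\end{equation}
which is manifestly real and nonnegative. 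Hence
\begin{equation}
\norm{\ket{\widetilde\phi}-\ket{\phi_0}}^2=2-2\,\Re\braket{\widetilde\phi|\phi_0}=2(1-C)\le 2\parens{1-\sqrt{\alpha_L^{(0)}}}.
\end{equation}

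Combining the two steps and recalling $\braket{\phi_0|H|\phi_0}=E_0$, the energy error is at most $2\norm{H}\norm{\ket{\widetilde\phi}-\ket{\phi_0}}\le 2\sqrt{2}\,\norm{H}\parens{1-\sqrt{\alpha_L^{(0)}}}^{1/2}$, which is exactly the claimed bound. The only genuinely delicate point is that the overlap $\braket{\widetilde\phi|\phi_0}$ be real and equal to $C$, so that the distance identity $2-2C$ holds with no stray phase; this is automatic here precisely because $\ket{\widetilde\phi}$ is the suitably normalized projection of $\ket{\phi_0}$, so no phase alignment is needed. I would deliberately avoid the alternative route that uses $(H-E_0)\ket{\phi_0}=0$ to annihilate the cross term: although that yields a bound scaling linearly in $\parens{1-\sqrt{\alpha_L^{(0)}}}$, its constant does not match the stated $2\sqrt{2}$, whereas the telescoping inequality reproduces the claimed square-root form with the correct prefactor.
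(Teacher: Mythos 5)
Your proposal is correct and follows essentially the same route as the paper: the paper also telescopes the energy difference as $\braket{\widetilde\phi|H|\phi'} + \braket{\phi'|H|\phi_0}$ with $\ket{\phi'}=\ket{\widetilde\phi}-\ket{\phi_0}$, bounds it by $2\norm{H}\norm{\ket{\phi'}}$ via Cauchy--Schwarz, and derives the same identity $\norm{\ket{\widetilde\phi}-\ket{\phi_0}}^2 = 2-2C$ with $C\ge\sqrt{\alpha_L^{(0)}}$. The only cosmetic difference is that you obtain $2-2C$ from the overlap $\braket{\widetilde\phi|\phi_0}=C$ (correctly noting it is real and nonnegative), whereas the paper expands the squared norm coefficient-by-coefficient.
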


\begin{proof}
First, we rewrite the error in energy. Let $\ket{\phi'} = \ket{\widetilde\phi}-\ket{\phi_0} $. Then we get 
\begin{equation}
\braket{\widetilde\phi | H | \widetilde\phi} - \braket{\phi_0 | H | \phi_0}
= \langle{\widetilde\phi}\vert H |\phi'\rangle + \langle \phi' \vert H  |\phi_0\rangle~.
\end{equation}
Then we have
\begin{equation}
\label{eq:partial_bound}
\braket{\widetilde\phi | H | \widetilde\phi} - \braket{\phi_0 | H | \phi_0}
\le |\bra{\widetilde\phi} H |\phi'\rangle|+ \abs{\langle \phi' \vert H  |\phi_0\rangle}
\le 2 \norm{\ket{\widetilde\phi}} \norm{H |\phi'\rangle}
\le 2 \cdot 1 \cdot \norm{H} \norm{|\phi'\rangle}.
\end{equation}

Finally, we calculate the norm difference as
\begin{equation}
\norm{\ket{\widetilde\phi} - \ket{\phi_0}}^2 
= \sum_{j=0}^{L-1} \parens{\frac{1}{C} - 1}^2 \abs{c_j}^2 + \sum_{L}^{N-1} \abs{c_j}^2 
= \parens{1 - \frac{2}{C} + \frac{1}{C^2}} C^2 + (1-C^2)
= 2 - 2C,
\end{equation}
and we have $C \ge \sqrt{\alpha_L^{(0)}}$ by \eqref{eq:alpha_def}, so $\norm{\ket{\widetilde\phi} - \ket{\phi_0}} \le \sqrt{2} \parens{1 - \sqrt{\alpha_L^{(0)}}}^{1/2}$.
Since $\ket{\phi'}$ is defined to be $\ket{\widetilde\phi} - \ket{\phi_0}$, plugging the above into \eqref{eq:partial_bound} completes the proof.
\end{proof}

\subsubsection{Proof of \cref{mthm:skqd-thm}}\label{sec:proof-main-thm}

From Theorem \ref{thm:krylov}, we have a state $\ket{\psi}$ with $\braket{\psi | H | \psi} - \braket{\phi_0 | H | \phi_0} < \eps$ with
\begin{equation}
\eps = 8 \Delta E_{N-1} \left(\frac{1 - \abs{\gamma_0}^2}{\abs{\gamma_0}^2}\right) \parens{1 + \frac{\pi \Delta E_1}{\Delta E_{N-1}}}^{-(d-1)}.
\end{equation}
Let 
\begin{align}\label{eq:eps-prime}
    \varepsilon' = \sqrt{1-\varepsilon/\Delta E_1}.
\end{align} 
Then, from \cref{lem:state-close}, we get 
\begin{equation}
\norm{\ket{\psi} - \ket{\phi_0}}^2 < 2 - 2\varepsilon'.
\end{equation}

Given $\ket{\phi_0}$ exhibits $(\alpha_L^{(0)}, \beta_L^{(0)})$-sparsity, by Lemma \ref{lem:state-conc-close} $\ket{\psi}$ exhibits $(\alpha_L, \beta_L)$-sparsity with parameters
\begin{equation}\label{eq:betal_supp}
\alpha_L = \alpha_L^{(0)} - 2\sqrt{2 - 2\varepsilon'} \qquad
\beta_L = \beta_L^{(0)} - 2\sqrt{2 - 2\varepsilon'}
\end{equation}
Hence by Lemma \ref{lem:conc-bitstring-prob}, for each of the $L$ important bitstrings $b_j$ with $j=0,1,...,L-1$, we will be able to sample $b_j$ with probability at least
\begin{equation}\label{eq:final-p}
p = \frac{\abs{\gamma_0}^2}{d^2} \parens{\beta_L^{(0)} - 2\sqrt{2 - 2\varepsilon'}}
\end{equation}
from at least one of the $\ket{\psi_k}$. Given $M$ measurements, the probability of failing to sample $b_j$ is at most $(1-p)^M$. Repeating this for each bitstring in 
$\{b_j\}_{j=0}^{L-1}$, the probability of failing to sample all $L$ bitstrings $b_1$ through $b_L$ is
\begin{equation}
p_{\text{fail}} \le L(1-p)^M \le Le^{-Mp} = L \exp\parens{-\frac{M\abs{\gamma_0^2}}{d^2} \parens{\beta_L^{(0)} - 2\sqrt{2 - 2\sqrt{1 - \frac{\eps}{\Delta E_1}}}}}
\end{equation}
by union bound and by using $p$ from \cref{eq:final-p} and $\varepsilon'$ from \cref{eq:eps-prime}. If we succeed in sampling all $L$ bitstrings, then the state $\ket{\widetilde\phi}$ as defined in Lemma \ref{lem:conc-error-bound} exists in the sampled subspace, so the calculated energy will be bounded by
\begin{equation}
\braket{\widetilde\phi | H | \widetilde\phi} - \braket{\phi_0 | H | \phi_0} \le 2\sqrt{2} \norm{H} \parens{ 1 - \sqrt{\alpha_L^{(0)}} }^{1/2},
\end{equation}
which completes the proof of \cref{mthm:skqd-thm}.

\section{Sparsity in the Ising model}\label{sec:sparsity-ising}

In this section, we prove the sparsity of the ground state in the computational basis for a particular Hamiltonian. We consider the transverse field Ising model with periodic boundary conditions
\begin{equation}
    H_n(h) = -\sum_{i=0}^{n-1} Z_i Z_{i+1} - h \sum_{i=0}^{n-1} X_i.
\end{equation}

\begin{theorem}
    \label{thm:tfim-sparsity}
    If $h = O((k/n)^a)$ for any $a > 1/2$, then in the limit $n\rightarrow\infty$ the ground state of $H_n(h)$ is fully supported on the $O(n^k)$ $Z$-basis states with Hamming weight at most $k$.
\end{theorem}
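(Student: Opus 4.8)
The plan is to work directly in the computational ($Z$) basis and show that the ground-state amplitude on a bitstring $x$ is suppressed by a factor growing with the Hamming weight $|x|$ (the number of spins flipped away from a ferromagnetic configuration), so that all but a vanishing fraction of the weight concentrates on $O(n^k)$ low-weight strings. First I would record the two structural facts that drive everything. Writing $H = H_0 + V$ with $H_0 = -\sum_i Z_iZ_{i+1}$ diagonal and $V = -h\sum_i X_i$, the trial state $\ket{0^n}$ gives $E_0 \le \braket{0^n|H|0^n} = -n$; and because the boundary conditions are periodic the number of domain walls $D(x)$ is even, so every $x\notin\{0^n,1^n\}$ has $D(x)\ge 2$ and diagonal energy $E^{\mathrm{diag}}(x) = -n+2D(x) \ge -n+4$. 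Hence $H_0$ has a gap $\ge 4$ above its doubly degenerate ground configurations, and the resolvent denominators satisfy $E^{\mathrm{diag}}(x)-E_0 \ge 2D(x) \ge 4$.

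Next I would project the eigenvalue equation $(H-E_0)\ket{\psi_0}=0$ onto a fixed Hamming-weight shell. Since each $X_i$ changes the weight by exactly $\pm1$, the weight-$w$ amplitudes couple only to weights $w\pm1$:
\begin{equation}
(E^{\mathrm{diag}}(x)-E_0)\,\psi_0(x) = h\Big(\sum_{i:\,x_i=1}\psi_0(x-e_i) + \sum_{i:\,x_i=0}\psi_0(x+e_i)\Big).
\end{equation}
Setting $P(w)=\sum_{|x|=w}\abs{\psi_0(x)}^2$ and bounding the up/down flip operators on the Boolean lattice by a one-line Cauchy--Schwarz estimate, $\norm{B_w}\le\sqrt{w(n-w+1)}\le\sqrt{wn}$, the gap bound turns this into the scalar recursion
\begin{equation}
4\sqrt{P(w)} \le h\sqrt{n}\,\Big(\sqrt{w}\,\sqrt{P(w-1)} + \sqrt{w+1}\,\sqrt{P(w+1)}\Big).
\end{equation}
Solving this yields geometric-plus-factorial decay $P(w)\lesssim (C\,nh^2)^{w}/w!$, governed by the single combination $nh^2$.

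Finally, with $h=O((k/n)^a)$ and $a>1/2$ one has $nh^2 = O(k^{2a}n^{1-2a})\to 0$, so $\sum_{w>k}P(w)\lesssim (Cnh^2)^{k+1}/(k+1)!\to 0$ as $n\to\infty$. By the global spin-flip symmetry $\psi_0(x)=\psi_0(\bar x)$ the identical bound holds for $w<n-k$, so the ground state is supported, up to vanishing error, within Hamming distance $k$ of $\{0^n,1^n\}$ --- a set of $2\sum_{j\le k}\binom{n}{j}=O(n^k)$ strings. (If one wants the literal ``weight $\le k$'' statement with a single ferromagnetic reference, an infinitesimal symmetry-breaking field, exactly the role of the main-text $h_2 Z_1$ term, selects $\ket{0^n}$.) Counting the weight-$\le k$ shells then gives the advertised $O(n^k)$ support.

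The hard part is closing the recursion rigorously, because the coupling to weight $w+1$ flows back into $P(w)$, and the crude operator bound $\norm{V}\le hn$ is far too lossy: it would only reach $a\ge 1$, not $a>1/2$, and a naive two-projector estimate $\norm{(I-P_0)\psi_0}\le\tfrac14\norm{V\psi_0}$ fails for the same reason. The honest fix is to exploit the concentration self-consistently through a Combes--Thomas exponential weight $e^{\mu N}$, where $N$ is the Hamming-weight operator: conjugation sends $V\mapsto -h\sum_i(e^{\mu}\sigma_i^+ + e^{-\mu}\sigma_i^-)$, and choosing $\mu$ so that $e^{\mu}h\sqrt{n}$ stays below the gap $4$ bounds $\braket{\psi_0|e^{2\mu N}|\psi_0}$ and hence gives $P(w)\lesssim e^{-2\mu w}$ directly. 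The one subtlety is that $e^{\mu N}$ blows up near $\ket{1^n}$; this is harmless, since the symmetry forces $P(w)=P(n-w)$ and the mid-weight shells are suppressed like $h^{n}$, or one simply tilts the weight toward the nearer ferromagnet.
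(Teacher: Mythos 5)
You take a genuinely different route from the paper---a gap-plus-locality argument carried out directly in the $Z$ basis---but it has a real gap at exactly the step you flag as ``the hard part,'' and neither of your proposed closures works. First, the recursion does not yield $P(w)\lesssim (Cnh^2)^w/w!$. Even dropping the upward coupling entirely, iterating $4\sqrt{P(w)}\le h\sqrt{wn}\,\sqrt{P(w-1)}$ gives $P(w)\le (nh^2/16)^w\,w!$, with the factorial in the \emph{numerator}, because the shell-to-shell operator norms $\sqrt{w(n-w+1)}$ compound multiplicatively; such a bound is useless for summing the tail up to $w=n$. The $1/w!$ you want encodes linked-cluster cancellations that operator-norm bookkeeping cannot see. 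Worse, the shell-inequality system you derived admits solutions concentrated on middle shells, so \emph{no} manipulation of those inequalities alone can prove the theorem: take $P(w_0)=1-\delta$ at $w_0\approx n/2$ (supported, say, on the block string $0^{n-w_0}1^{w_0}$, which has only two domain walls, so the denominator really is $4$), $P(w_0\pm 1)=\delta/2$, and all other shells zero; every inequality is satisfied once $\delta\gtrsim 16/(hn)^2$, and $hn=k^an^{1-a}\to\infty$ precisely in the regime $1/2<a<1$ you need to cover. Second, the Combes--Thomas patch is circular, as you half-admit: the conjugated perturbation $-h\sum_i\bigl((e^{\mu}-1)\sigma_i^{+}+(e^{-\mu}-1)\sigma_i^{-}\bigr)$ has \emph{global} operator norm $\Theta(\mu hn)$, not $O(\mu h\sqrt{n})$; the $h\sqrt{wn}\le h\sqrt{kn}$ estimate holds only on the low-weight shells where the state is already assumed to live. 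With the honest norm you are forced to $\mu=O(1/(hn))\to 0$, so $e^{-2\mu w}$ gives nothing. ``Exploit the concentration self-consistently'' names the circularity; it does not break it---breaking it is a cluster-expansion-level undertaking that the sketch does not supply.

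For comparison, the paper's proof avoids all of this with two ingredients: (i) the identity $1-M_n(h)=\tfrac{2}{n}\langle\phi_n(h)|\hat{W}|\phi_n(h)\rangle$, where $\hat{W}=\sum_i(1-Z_i)/2$ is the Hamming-weight operator, combined with a Markov-type inequality, giving $S_n(k,h)\le n\bigl(1-M_n(h)\bigr)/(2k+2)$---this is your shell decomposition used at the level of first moments rather than amplitudes; and (ii) Pfeuty's exact formula $M(h)=(1-h^2)^{1/8}$, which supplies $1-M_n(h)=O(h^2)$ nonperturbatively. All the many-body content your recursion tries to rederive is imported from the exact solution in one line. (Your treatment of the cat-state degeneracy mirrors the paper's and is fine.) If you want to keep your route, the missing lemma is precisely a rigorous, $n$-uniform proof that $\langle\phi_n(h)|\hat{W}|\phi_n(h)\rangle=O(nh^2)$; the cheapest rigorous source of that bound is again the exact solution, at which point the Markov step finishes the proof and the shell recursion becomes unnecessary.
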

\begin{proof}
We will ignore the degeneracy of the ground state for simplicity, but the result holds without this assumption by symmetry.
Let $\ket{\phi_n(h)}$ be the ground state of $H_n(h)$.
With this assumption, the ground state is $\ket{\phi_n(0)} = \ket{00\dots0}$ and $\lim_{h\to\infty}\ket{\phi_n(h)} = \ket{++\dots +}$.
Thus, intuitively, $h$ controls the sparsity of the ground state in the computational basis.

Let $|x|$ be the Hamming weight of the bit string $x$, and define
\begin{align}
    &M_n(h) = \frac{1}{n} \sum_{i=0}^{n-1} \bra{\phi_n(h)} Z_i \ket{\phi_n(h)} \\
    &M(h) = \lim_{n\to\infty} M_n(h)\\
    &\bar S_n(k, h) = \sum_{\substack{x \in \{0,1\}^n \\ |x| \leq k}} \abs{\braket{x|\phi_n(h)}}^2 \\
    &S_n(k, h) = 1 - \bar S_n(k, h).
\end{align}
$S_n(k, h)$ is a proxy for sparsity -- it being small implies that there is very little weight on states outside of the $d$-dimensional subspace defined by Hamming weight less than $k$, where $d = \sum_{w=0}^k \binom{n}{w}$.

Define $\alpha_x$ by $\ket{\phi_n(h)} = \sum_{x\in \{0,1\}^n} \alpha_x \ket x$.
Furthermore, define $\bar P_n(w, h) = \sum_{\substack{x\in \{0,1\}^n \\ |x| = w}} |\alpha_x|^2$, so that $\bar S_n(k, h) = \sum_{w=0}^k \bar P_n(w, h)$.
Then 
\begin{align}
    M_n(h)
    &= \frac{1}{n}\sum_{i=0}^{n-1} \sum_{x,y\in \{0,1\}^n} \bar\alpha_x \alpha_y \bra x Z_i \ket y \\
    &= \frac{1}{n}\sum_{i=0}^{n-1} \sum_{x\in \{0,1\}^n} |\alpha_x|^2 (-1)^{x_i} \\
    &= \frac{1}{n} \sum_{w=0}^n \sum_{\substack{x\in \{0,1\}^n \\ |x|=w}} |\alpha_x|^2 \sum_{i=0}^{n-1} (-1)^{x_i} \\
    &= \frac{1}{n} \sum_{w=0}^n \sum_{\substack{x\in \{0,1\}^n \\ |x|=w}} |\alpha_x|^2  (n-2w) \\
    &= \sum_{w=0}^n \bar P_n(w, h) (1-2w/n) \\
    &= \sum_{w=0}^k \bar P_n(w, h) (1-2w/n) + \sum_{w=k+1}^{n} \bar P_n(w, h) (1-2w/n)\\
    &\leq \bar S_n(k, h) + \sum_{w=k+1}^{n} \bar P_n(w, h) (1-2w/n) \\
    &\leq \bar S_n(k, h) + (1-2(k+1)/n) \sum_{w=k+1}^{n} \bar P_n(w, h)  \\
    &= \bar S_n(k, h) + (1-2(k+1)/n) S_n(k, h)  \\
    &= 1 - \frac{2(k+1)}{n} S_n(k, h).
\end{align}
It follows that 
\begin{equation}
    \label{eq:sparsity}
    S_n(k, h) \leq \min\left(\frac{n(1-M_n(h))}{2k+2}, 1 \right) .
\end{equation}

If $S_n(k, h) \to 0$ as $n\to\infty$, then the subspace of Hamming weight $\leq k$ bitstrings is fully capturing the ground state.
The dimension of this space is $d = \sum_{w=0}^k \binom{n}{w}$, which is $\sim n^k$ for constant $k$.

\cref{eq:sparsity} implies that if $M_n(h) = 1 - O((k/n)^a)$ for any $a > 1$, the ground state is fully supported on $O(n^k)$ bitstrings (ie.~if $M_n(h) = 1 - O((k/n)^a)$ for $a>1$, then $S_n(k, h)$ decays to zero with increasing $n$).
From the phase diagram of the transverse field Ising model, we know \cite[Eq.~3.12]{pfeuty1970the-one-dimensi} that $M(h) = (1-h^2)^{1/8}$ for $0 \leq h \leq 1$, and $M_n(h) \to M(h)$ converges continuously with $n\to\infty$.
It follows that $h^2 = O((k/n)^a)$ for $a > 1$ suffices.
This completes the proof.
\end{proof}

\begin{corollary}
    \label{cor:tfim-sparsity-disordered}
    If $h = O((n/k)^a)$ for any $a > 1/2$, then in the limit $n\rightarrow\infty$ the ground state of $H_n(h)$ is fully supported on $O(n^k)$ $X$-basis states.
\end{corollary}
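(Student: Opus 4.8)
The plan is to mirror the proof of \cref{thm:tfim-sparsity}, replacing the longitudinal ($Z$) magnetization by the transverse ($X$) magnetization and the small-field regime by the large-field (disordered) regime. The guiding observation is that the Hadamard transformation $U = H^{\otimes n}$ maps $\ket{\phi_n(h)}$ to the ground state of $\tilde H = -\sum_i X_iX_{i+1} - h\sum_i Z_i$ while carrying $X$-basis product states bijectively onto $Z$-basis product states and preserving their Hamming weight (since $H\ket{+} = \ket{0}$ and $H\ket{-} = \ket{1}$). Hence the ground state of $H_n(h)$ is supported on the $O(n^k)$ $X$-basis states of weight $\le k$ if and only if $U\ket{\phi_n(h)}$ is $Z$-sparse, and the relevant order parameter is the transverse magnetization $M^X_n(h) = \frac{1}{n}\sum_i \braket{\phi_n(h) | X_i | \phi_n(h)}$. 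In the disordered phase the ground state is moreover unique, so the degeneracy caveat of \cref{thm:tfim-sparsity} does not even arise.

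First I would repeat the combinatorial computation of \cref{thm:tfim-sparsity} verbatim in the $X$-basis. Writing $\ket{\phi_n(h)} = \sum_{\tilde x} \beta_{\tilde x}\ket{\tilde x}$ in the eigenbasis of the $X_i$, with $|\tilde x|$ the number of $\ket{-}$ factors, one has $X_i\ket{\tilde x} = (-1)^{\tilde x_i}\ket{\tilde x}$, so the identical manipulation yields $M^X_n(h) = \sum_{w} \bar P^X_n(w,h)\,(1-2w/n)$ and therefore the sparsity bound $S^X_n(k,h) \le \min\big(n(1-M^X_n(h))/(2k+2),\,1\big)$, where $S^X_n(k,h)$ is the weight of $\ket{\phi_n(h)}$ outside the $X$-Hamming-weight-$\le k$ subspace. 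This step is purely formal and carries over without change.

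The second step supplies the large-$h$ asymptotics of the transverse magnetization, the analogue of the fact $M(h) = (1-h^2)^{1/8}$ used in the theorem. From the exact Jordan--Wigner free-fermion solution of the TFIM one has, in the thermodynamic limit, $M^X(h) = \frac{1}{2\pi}\int_0^{2\pi} (h-\cos k)\,(1+h^2-2h\cos k)^{-1/2}\,dk$, and expanding this integral in powers of $1/h$ gives $1 - M^X(h) = \tfrac{1}{4h^2} + O(1/h^3)$. As in the theorem, $M^X_n(h) \to M^X(h)$ continuously in $n$, so $1 - M^X_n(h) = O(1/h^2)$ for large $h$. Combining with the bound above, $S^X_n(k,h) = O\big(n/(k h^2)\big)$; under the hypothesis $1/h = O((k/n)^a)$ with $a > 1/2$ (equivalently $h = \Omega((n/k)^a)$, the large-field regime dual to the small-field regime of \cref{thm:tfim-sparsity}), we get $1/h^2 = O((k/n)^{2a})$ with $2a > 1$, whence $S^X_n(k,h) = O((k/n)^{2a-1}) \to 0$. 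This forces the ground state to be fully supported on the $d = \sum_{w=0}^k \binom{n}{w} \sim n^k$ states of $X$-Hamming weight $\le k$, which is the claim.

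The main obstacle—everything else being a transcription of the theorem—is establishing the $1/h^2$ decay of $1 - M^X(h)$ rigorously and uniformly enough in $n$. The longitudinal result quoted in \cref{thm:tfim-sparsity} is cited directly from Pfeuty; here I would instead need the transverse-magnetization expression and its large-$h$ expansion, together with the statement that the finite-size $M^X_n$ converge to it (with the $h>1$ disordered phase staying away from the criticality at $h=1$, so there is no subtlety from the gap closing). The pleasant consistency is that, because both $1-M(h)$ and $1-M^X(h)$ vanish \emph{quadratically} in their respective small parameters ($h$ and $1/h$), the exponent threshold $a > 1/2$ is identical in the theorem and in the corollary.
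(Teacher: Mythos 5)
Your proof is correct (at the same level of rigor as the paper's own \cref{thm:tfim-sparsity}, i.e., modulo quoting exact thermodynamic-limit magnetization formulas and their finite-$n$ convergence), but it takes a genuinely different and more self-contained route. The paper's proof is a one-line reduction: conjugate by on-site Hadamards to get $H_n'(h) = -\sum_i X_iX_{i+1} - h\sum_i Z_i$ and then invoke \cref{thm:tfim-sparsity} with small parameter $1/h$. You instead rerun the theorem's argument in the $X$-basis: the combinatorial estimate $S_n^X(k,h)\le \min\bigl(n(1-M_n^X(h))/(2k+2),\,1\bigr)$ is basis-agnostic and carries over verbatim, while the role of Pfeuty's longitudinal magnetization $M(h)=(1-h^2)^{1/8}$ is taken over by the large-field expansion of the transverse magnetization, $1-M^X(h)=\tfrac{1}{4h^2}+O(h^{-3})$, which you extract correctly from the free-fermion solution. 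This buys something real: \cref{thm:tfim-sparsity}, even after Hadamard relabeling of axes, is a weak-field (ordered-phase) statement --- sparsity in the eigenbasis of the Ising coupling --- whereas $H_n'(h)$ at large $h$ is in the strong-field (disordered) regime, where sparsity is claimed in the eigenbasis of the field term; an on-site Hadamard only relabels axes and cannot exchange the two regimes (that exchange is the non-local Kramers--Wannier duality $h\leftrightarrow 1/h$). So the strong-field statement needs its own magnetization input, and your $1/(4h^2)$ expansion is exactly the ingredient the paper's terse proof leaves implicit. Two further points in your favor: you read the hypothesis correctly as $1/h = O((k/n)^a)$ (the corollary's literal ``$h = O((n/k)^a)$'' is evidently a slip, and the paper's proof itself uses the former), and uniqueness of the disordered-phase ground state removes the degeneracy caveat that the theorem must dismiss ``by symmetry.'' What remains to finish, as you flag yourself, is a citable reference for the transverse-magnetization integral and its finite-$n$ convergence for $h$ bounded away from criticality, playing the role Pfeuty's formula plays in \cref{thm:tfim-sparsity}.
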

\begin{proof}
    Apply a Hadamard matrix on each qubit to get a transformed Hamiltonian 
    $H_n'(h) = - \sum_{i=0}^{n-1} X_i X_{i+1} - h \sum_{i=0}^{n-1} Z_i$.
    Then, by \cref{thm:tfim-sparsity}, as $n\rightarrow\infty$ the ground state of $H_n'(h)$ is fully supported on $O(n^k)$ $X$-basis states if $1/h = O((k/n)^a)$ for any $a > 1/2$.
\end{proof}

\cref{thm:tfim-sparsity} proves that the ground state of the transverse field Ising model is sparse in a product state basis (the $Z$-basis) deep in the ordered phase.
Meanwhile, \cref{cor:tfim-sparsity-disordered} proves that the ground state is sparse in a product state basis (the $X$-basis) deep in the disordered phase.

\section{Comparison with Alternative Notions of Sparsity}\label{sec:app-sparsity-notions}

 Note that our definition of sparseness (or peakedness), as defined in \cref{eq:alpha_def} and \cref{eq:beta_def} differs from those in \cite{bravyi2024classical, aaronson2024verifiable}. In \cite{aaronson2024verifiable}, a unitary circuit \( C \) is defined as \(\delta\)-peaked if there exists at least one bitstring \( s \in \{0,1\}^n \) such that \( |\langle s |C|0\rangle|^2 \geq \delta \). The focus of \cite{aaronson2024verifiable} was on random peaked circuits and determining whether such circuits can be distinguished from fully random circuits in classical polynomial time. Similarly, \cite{bravyi2024classical} defines a circuit family \( \{U_n\} \) as peaked if for some \( a \in \mathbb{Z}_{\geq 0} \), each \( U_n \) is \(\delta\)-peaked for \( n \geq 0 \) with \( \delta = n^{-a} \). In this context, an \( n \)-qubit circuit is considered peaked if it has an output probability that is at least inverse-polynomial in \( n \). In \cite{bravyi2024classical}, the authors developed classical algorithms for sampling and estimating output probabilities of constant-depth peaked quantum circuits. In contrast, our definition of sparsity requires the weight to be concentrated on $L$ bitstrings, rather than requiring at least one bitstring to have high probability.

\section{Comparison between the sample complexity in SKD and KQD in the Transverse Field Ising Model}\label{sec:kqd-skqd}
To establish an insight on the practical performance of SKQD, we first showcase numerical simulations on a lattice model. These simulations use the shifted Krylov space given by \cref{eq:krylov-basis-shifted} and the usual $\Delta{t} = \pi/\Delta{E_{N-1}}$ in the context of \cref{thm:krylov}.

We consider a perturbed transverse field Ising model
\begin{equation}
\label{eq:ising}
H = -\sum_{j=1}^{n-1} Z_j Z_{j+1} - h_1 \sum_{j=1}^n X_j - h_2 Z_1.
\end{equation}
When $h_1 = h_2 = 0$, the ground states are spanned by the bitstrings $\ket{0^n}$ and $\ket{1^n}$. A positive $h_2$ breaks the degeneracy in favor for $\ket{0^n}$. In \cref{sec:sparsity-ising}, for $h_1 = h$ and $h_2=0$, we show that if $h = O((k/n)^a)$ for any $a > 1/2$, then the ground state of $H_n(h)$ is fully supported on $O(n^k)$ bitstrings.

\begin{figure}[t]
\leavevmode\centering
\includegraphics[width=0.5\columnwidth]{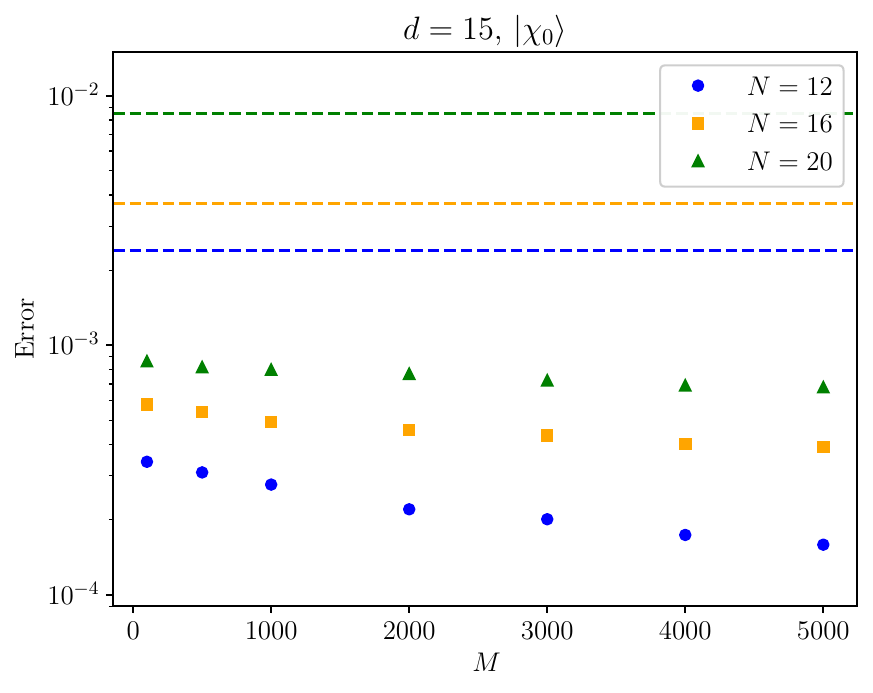}
\caption{\textbf{Comparison of SKQD and KQD methods.} For a perturbed transverse field Ising model Hamiltonian $H$ with equal transverse field and perturbation $h_1 = h_2 = 0.1$, and initial state $\ket{\chi_0} = \ket{0^n}$, the SKQD approach (markers) achieves lower error compared to the standard KQD method. We evaluate the SKQD approach with varying numbers of samples measured per Krylov state, denoted by $M$, and set the number of Krylov basis states to $d=15$. In the KQD approach (dashed lines), we incorporate Gaussian noise with a standard deviation of $\frac{1}{\sqrt{M}}$, where $M = 5000$, while estimating matrix elements of $H$.}
\label{fig:krylov-comparison}
\end{figure}

For our numerical simulations, we use the initial state $\ket{\chi_0} = \ket{0^n}$.  In Figure \ref{fig:krylov-comparison}, we compare the performance of SKQD approach with the standard KQD approach. To ensure a fair comparison, we add a Gaussian noise $\mathcal{N}(0, \frac{1}{\sqrt{M}})$ to each matrix element, as described in \cref{eq:gen-eig-mats}. Let $h_1 = h_2 = 0.1$. We run the SKQD approach for $d=15$ different Kyrlov basis states and for varying numbers of samples from each basis state. We perform simulations with different numbers of qubits, as shown in \cref{fig:krylov-comparison}. We  set $M=5000$ in the KQD approach while computing each matrix element. Moreover, we selected the best instance for the SKQD approach from 1000 trials.
\cref{fig:krylov-comparison} demonstrates that our approach (SKQD) outperforms the standard KQD approach across different numbers of qubits.  Thus, our numerical simulations extend beyond the analytical results, showing that the SKQD approach can outperform the standard KQD approach under the sparsity assumption on the ground states.

\section{Effect of Trotter error in SKQD}\label{sec:trotter_error}

In this section, we briefly analyze the effect of Trotter error on the performance of SKQD. We assume that $U_k \equiv e^{-i k \Delta t H}$ can be approximated with $\gamma$ Trotter error for all $k \in \{0, 1, \dots, d-1\}$. Let $V_k$ denote the Trotter approximation for $U_k$. Then the following inequalities hold: 
\begin{align}
    \Vert U_k - V_k \Vert &\leq \gamma,\\
\implies \Vert  U_k \ket{\psi_0} - V_k \ket{\psi_0}\Vert_1 &\leq \gamma,\\
\implies \Vert p_k - \hat{p}_k \Vert_1 &\leq \gamma, \\
\implies |p_k^{(j)}- \hat{p}_k^{(j)}| & \leq \gamma,\\
\implies \hat{p}_k^{(j)}&\geq p_{k}^{(j)} - \gamma, \label{eq:noisy-bstring-prob}
\end{align}
where the third inequality follows from monotonicity of the trace distance. Here, $p_k (\hat{p}_k)$ is the distribution on computational basis state for the state $U_k \ket{\psi_0} (V_k \ket{\psi_0})$. Moreover $q_k^{(j)}$ denotes the probability of bitstring $j$ base on distribution $q_k$. 

Then by combining \cref{eq:noisy-bstring-prob} with \cref{lem:conc-bitstring-prob} and \cref{lem:pfail}, we get that the number of shots ($M$) needed for sampling all relevant $L$ bitstrings with a high probability (larger than $1-\eta$) scales as
\begin{align}
    M \geq \frac{\log L/\eta}{\left( |\gamma_0|^2\beta_L/d^2 -\gamma\right) }~.
\end{align}

\section{Additional SKQD experiments for the Single Impurity Anderson Model}\label{msec:ksqsd-experiments}

The aim of this section is to study the accuracy of SKQD for in the SIAM for different system sizes than the one shown in the main text. In particular we consider the $K = 29$ bath-site model. As in the main text, we benchmark the accuracy of SKQD against DMRG. Each DMRG run performed 20 sweeps. The first four sweeps have a maximum bond dimension of $250$, the next four sweeps have a maximum bond dimension of $400$, and the remaining $12$ sweeps a maximum bond dimension of $500$. At each sweep we add noise of amplitude $10^{-4}$ in the first four sweeps, $10^{-5}$ in the next four sweeps, $10^{-7}$ in the next four sweeps, and $0$ in the remaining. We analyze the relative error in the ground state energy and the agreement in the prediction of the two-point spin and density correlation functions (see Eqs.~\ref{eq: spin correlation} and~\ref{eq: density correlation}).

\begin{figure*}[t]
\centering
\includegraphics[width=1\linewidth]{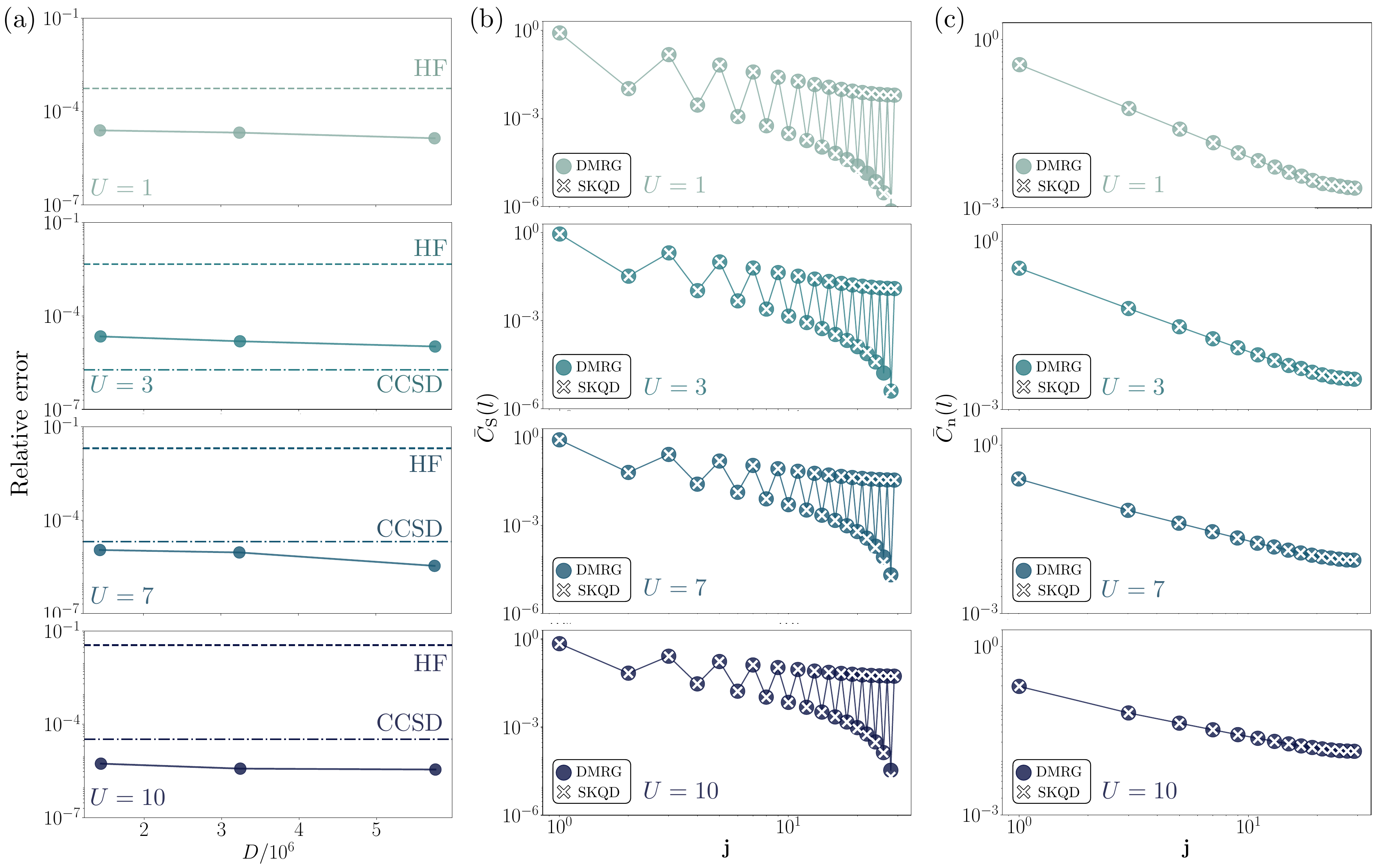}
\caption{SKQD vs DMRG in the SIAM with 29 bath sites (61-qubit experiment). Different rows correspond to different values of the onsite repulsion $U$ in the impurity. \textbf{(a)} Relative error in the ground state energy estimation using SKQD, as a function of the subspace dimension $D$. The DMRG estimation is taken as the ground truth. The Hartree-Fock (HF) and coupled cluster with single and double excitations (CCSD) are also included for reference. The dots correspond to the SKQD estimation in the $\mathbf{k}$-adjacent natural orbitals. \textbf{(b)} Comparison of the two-point spin correlation functions (see Eq.~\ref{eq: spin correlation}) obtained with DMRG and SKQD. \textbf{(c)} Comparison of the two-point density correlation functions (see Eq.~\ref{eq: density correlation}) obtained with DMRG and SKQD. }
\label{fig: dmrg vs sqd appendix}
\end{figure*}

Figure~\ref{fig: dmrg vs sqd appendix} (a) shows the relative (to DMRG) error in the SKQD ground-state energy estimation as a function of the subspace dimension on the SKQD eigenstate solver $D$. The Hartree-Fock (HF) and coupled cluster with single and double excitations (CCSD) errors are also shown for reference. The SKQD relative error decreases from values $\sim 10^{-5}$ to $\sim 10^{-6}$ as $U$ increases from $U = 1$ to $U = 10$. This is a consequence of the increased ground-state sparsity for larger values of $U$. 

Panel (b) of Fig.~\ref{fig: dmrg vs sqd appendix} compares the values of $\bar{C}_\textrm{S}(\mathbf{j})$ obtained from SKQD to those obtained with DMRG. The SKQD estimations are in excellent agreement with the DMRG values for most values of $\mathbf{j}$, the distance between the impurity spin and the bath spin. There are small deviations for odd values of $\mathbf{j}$ at larger values of $\mathbf{j}$, where the value of the correlation is negligible. Panel (c) in Fig.~\ref{fig: dmrg vs sqd appendix} compares the values of $\bar{C}_\textrm{n}(\mathbf{j})$ obtained from SKQD to those obtained with DMRG, for even values of $\mathbf{j}$. The SKQD estimations are in excellent agreement with the DMRG values for all values of $\mathbf{j}$.

The accuracy for the system size presented in this Appendix ($K = 29$), does not significantly differ from the accuracy on the larger system size ($K = 41$) shown in the main text. We conclude that the accuracy of SKQD does not significantly deteriorate with system size in the SIAM.

\section{Signal in the quantum experiments for the Single Impurity Anderson Model}\label{signal}
Given the large circuit sizes of our experiments and the effect of noise, we investigate whether there is a useful signal coming out of the quantum circuits, comparing the outcome of SKQD (with configuration recovery) run on samples coming from the device and uniform random samples. The random samples are drown from the uniform distribution in the space of bitstrings whose length is the same as the number of fermionic modes in the system.

This test is conducted on the SIAM with $K = 41$ bath sites and the same values of $U$ as the ones shown in the main text: $U = 1, 3, 7, 10$. The subspace dimension chosen to project and diagonalize the Hamiltonian is $D = 2.56\cdot 10^6$ electronic configurations, and the total number of sampled bitstrings is the same in both cases: $2.5\cdot 10^6$.
\begin{figure*}[t]
\centering
\includegraphics[width=.9\linewidth]{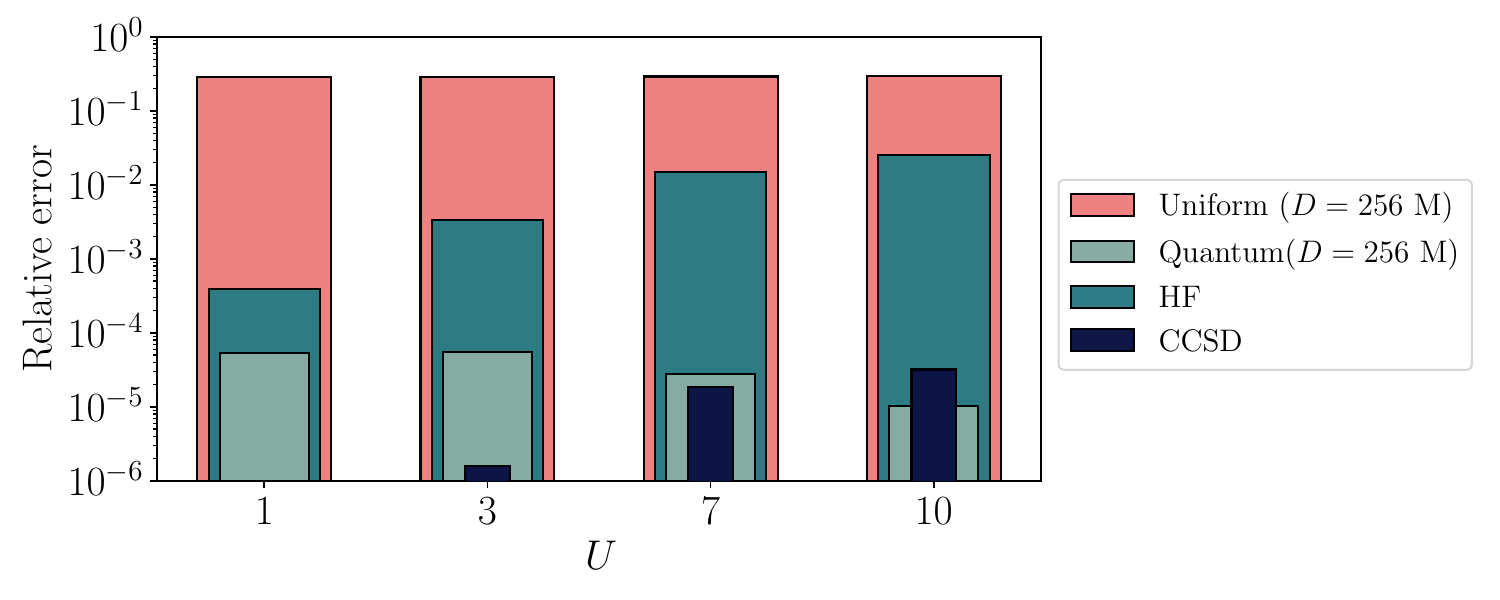}
\caption{Signal in the quantum experiments. Relative error in the SKQD ground-state energy estimation for different values of $U$. The error is computed relative to the DMRG energy. SKQD is run both on samples coming from the quantum device and the uniform distribution. The HF and CCSD relative errors are included for reference.}
\label{fig: skqd vs uniform}
\end{figure*}

Figure~\ref{fig: skqd vs uniform} shows the relative (to DMRG) error in the ground-state energy obtained from running SKQD on samples drawn from the device and samples drawn from the uniform distribution for different values of $U$. The relative error in the ground-state energy is orders of magnitude lower in the SKQD estimation run on samples coming from the quantum device. We therefore conclude that SKQD with configuration recovery is capable of extracting an useful signal from the device.


\bibliographystyle{unsrt}
\bibliography{refs}

\begin{thebibliography}{10}

\bibitem{kitaev1995quantum}
A~Yu Kitaev.
\newblock Quantum measurements and the {A}belian stabilizer problem.
\newblock {\em arXiv preprint quant-ph/9511026}, 1995.

\bibitem{peruzzo2014variational}
Alberto Peruzzo, Jarrod McClean, Peter Shadbolt, Man-Hong Yung, Xiao-Qi Zhou,
  Peter~J Love, Al{\'a}n Aspuru-Guzik, and Jeremy~L O’brien.
\newblock A variational eigenvalue solver on a photonic quantum processor.
\newblock {\em Nature Communications}, 5(1):4213, 2014.

\bibitem{wecker2015progress}
Dave Wecker, Matthew~B Hastings, and Matthias Troyer.
\newblock Progress towards practical quantum variational algorithms.
\newblock {\em PRA}, 92(4):042303, 2015.

\bibitem{cerezo2021variational}
Marco Cerezo, Andrew Arrasmith, Ryan Babbush, Simon~C Benjamin, Suguru Endo,
  Keisuke Fujii, Jarrod~R McClean, Kosuke Mitarai, Xiao Yuan, Lukasz Cincio,
  et~al.
\newblock Variational quantum algorithms.
\newblock {\em Nature Reviews Physics}, 3(9):625--644, 2021.

\bibitem{larocca2024review}
Martin Larocca, Supanut Thanasilp, Samson Wang, Kunal Sharma, Jacob Biamonte,
  Patrick~J Coles, Lukasz Cincio, Jarrod~R McClean, Zo{\"e} Holmes, and
  M~Cerezo.
\newblock A review of barren plateaus in variational quantum computing.
\newblock {\em arXiv preprint arXiv:2405.00781}, 2024.

\bibitem{mcclean2017subspace}
Jarrod~R. McClean, Mollie~E. Kimchi-Schwartz, Jonathan Carter, and Wibe~A.
  de~Jong.
\newblock Hybrid quantum-classical hierarchy for mitigation of decoherence and
  determination of excited states.
\newblock {\em Phys. Rev. A}, 95:042308, Apr 2017.

\bibitem{parrish2019quantum}
Robert~M Parrish and Peter~L McMahon.
\newblock Quantum filter diagonalization: Quantum eigendecomposition without
  full quantum phase estimation.
\newblock {\em {arXiv preprint, arXiv:1909.08925}}, 2019.

\bibitem{motta2020qite_qlanczos}
Mario Motta, Chong Sun, Adrian T.~K. Tan, Matthew~J. O'Rourke, Erika Ye,
  Austin~J. Minnich, Fernando G. S.~L. Brand{\~a}o, and Garnet Kin-Lic Chan.
\newblock Determining eigenstates and thermal states on a quantum computer
  using quantum imaginary time evolution.
\newblock {\em Nature Physics}, 16(2):205--210, 2020.

\bibitem{klymko2022realtime}
Katherine Klymko, Carlos Mejuto-Zaera, Stephen~J. Cotton, Filip Wudarski,
  Miroslav Urbanek, Diptarka Hait, Martin Head-Gordon, K.~Birgitta Whaley,
  Jonathan Moussa, Nathan Wiebe, Wibe~A. de~Jong, and Norm~M. Tubman.
\newblock Real-time evolution for ultracompact {H}amiltonian eigenstates on
  quantum hardware.
\newblock {\em PRX Quantum}, 3:020323, May 2022.

\bibitem{Epperly_2022}
Ethan~N. Epperly, Lin Lin, and Yuji Nakatsukasa.
\newblock A theory of quantum subspace diagonalization.
\newblock {\em SIAM Journal on Matrix Analysis and Applications},
  43(3):1263–1290, August 2022.

\bibitem{shen2023realtimekrylov}
Yizhi Shen, Katherine Klymko, James Sud, David~B. Williams-Young, Wibe A.~de
  Jong, and Norm~M. Tubman.
\newblock Real-{T}ime {K}rylov {T}heory for {Q}uantum {C}omputing {A}lgorithms.
\newblock {\em {Quantum}}, 7:1066, July 2023.

\bibitem{yang2023dualgse}
Bo~Yang, Nobuyuki Yoshioka, Hiroyuki Harada, Shigeo Hakkaku, Yuuki Tokunaga,
  Hideaki Hakoshima, Kaoru Yamamoto, and Suguru Endo.
\newblock Dual-{GSE}: Resource-efficient generalized quantum subspace
  expansion.
\newblock {\em {arXiv preprint, arXiv:2309.14171}}, 2023.

\bibitem{yang2023shadow}
Ruyu Yang, Tianren Wang, Bing-Nan Lu, Ying Li, and Xiaosi Xu.
\newblock Shadow-based quantum subspace algorithm for the nuclear shell model.
\newblock {\em {arXiv preprint, arXiv:2306.08885}}, 2023.

\bibitem{ohkura2023leveraging}
Yasuhiro Ohkura, Suguru Endo, Takahiko Satoh, Rodney~Van Meter, and Nobuyuki
  Yoshioka.
\newblock Leveraging hardware-control imperfections for error mitigation via
  generalized quantum subspace.
\newblock {\em {arXiv preprint, arXiv:2303.07660}}, 2023.

\bibitem{kanno2023quantum}
Keita Kanno, Masaya Kohda, Ryosuke Imai, Sho Koh, Kosuke Mitarai, Wataru
  Mizukami, and Yuya~O Nakagawa.
\newblock Quantum-selected configuration interaction: Classical diagonalization
  of {H}amiltonians in subspaces selected by quantum computers.
\newblock {\em {arXiv preprint, arXiv:2302.11320}}, 2023.

\bibitem{ibm2024chemistry}
Javier Robledo-Moreno, Mario Motta, Holger Haas, Ali Javadi-Abhari, Petar
  Jurcevic, William Kirby, Simon Martiel, Kunal Sharma, Sandeep Sharma,
  Tomonori Shirakawa, Iskandar Sitdikov, Rong-Yang Sun, Kevin~J. Sung, Maika
  Takita, Minh~C. Tran, Seiji Yunoki, and Antonio Mezzacapo.
\newblock Chemistry beyond the scale of exact diagonalization on a
  quantum-centric supercomputer.
\newblock {\em Science Advances}, 11(25):eadu9991, 2025.

\bibitem{yoshioka2024diagonalization}
Nobuyuki Yoshioka, Mirko Amico, William Kirby, Petar Jurcevic, Arkopal Dutt,
  Bryce Fuller, Shelly Garion, Holger Haas, Ikko Hamamura, Alexander Ivrii,
  Ritajit Majumdar, Zlatko Minev, Mario Motta, Bibek Pokharel, Pedro Rivero,
  Kunal Sharma, Christopher~J. Wood, Ali Javadi-Abhari, and Antonio Mezzacapo.
\newblock Diagonalization of large many-body {H}amiltonians on a quantum
  processor.
\newblock {\em {arXiv preprint, arXiv:2407.14431}}, 2024.

\bibitem{motta2023subspace}
Mario Motta, William Kirby, Ieva Liepuoniute, Kevin~J Sung, Jeffrey Cohn,
  Antonio Mezzacapo, Katherine Klymko, Nam Nguyen, Nobuyuki Yoshioka, and
  Julia~E Rice.
\newblock Subspace methods for electronic structure simulations on quantum
  computers.
\newblock {\em Electronic Structure}, 6(1):013001, 2024.

\bibitem{oumarou2025molecular}
Oumarou Oumarou, Pauline~J Ollitrault, Cristian~L Cortes, Maximilian Scheurer,
  Robert~M Parrish, and Christian Gogolin.
\newblock Molecular properties from quantum krylov subspace diagonalization.
\newblock {\em arXiv preprint arXiv:2501.05286}, 2025.

\bibitem{kim2023evidence}
Youngseok Kim, Andrew Eddins, Sajant Anand, Ken~Xuan Wei, Ewout Van Den~Berg,
  Sami Rosenblatt, Hasan Nayfeh, Yantao Wu, Michael Zaletel, Kristan Temme,
  et~al.
\newblock Evidence for the utility of quantum computing before fault tolerance.
\newblock {\em Nature}, 618(7965):500--505, 2023.

\bibitem{shinjo2024unveiling}
Kazuya Shinjo, Kazuhiro Seki, Tomonori Shirakawa, Rong-Yang Sun, and Seiji
  Yunoki.
\newblock Unveiling clean two-dimensional discrete time quasicrystals on a
  digital quantum computer.
\newblock {\em arXiv preprint arXiv:2403.16718}, 2024.

\bibitem{kaliakin2024supramolecular}
Danil Kaliakin, Akhil Shajan, Javier~Robledo Moreno, Zhen Li, Abhishek Mitra,
  Mario Motta, Caleb Johnson, Abdullah~Ash Saki, Susanta Das, Iskandar
  Sitdikov, et~al.
\newblock Accurate quantum-centric simulations of supramolecular interactions.
\newblock {\em arXiv preprint arXiv:2410.09209}, 2024.

\bibitem{barison2024ext-sqd}
Stefano Barison, Javier~Robledo Moreno, and Mario Motta.
\newblock Quantum-centric computation of molecular excited states with extended
  sample-based quantum diagonalization.
\newblock {\em arXiv preprint arXiv:2411.00468}, 2024.

\bibitem{liepuoniute2024triplet}
Ieva Liepuoniute, Kirstin~D Doney, Javier Robledo-Moreno, Joshua~A Job, Will~S
  Friend, and Gavin~O Jones.
\newblock Quantum-centric study of methylene singlet and triplet states.
\newblock {\em arXiv preprint arXiv:2411.04827}, 2024.

\bibitem{shajan2024SQD-DMET}
Akhil Shajan, Danil Kaliakin, Abhishek Mitra, Javier~Robledo Moreno, Zhen Li,
  Mario Motta, Caleb Johnson, Abdullah~Ash Saki, Susanta Das, Iskandar
  Sitdikov, et~al.
\newblock Towards quantum-centric simulations of extended molecules:
  sample-based quantum diagonalization enhanced with density matrix embedding
  theory.
\newblock {\em arXiv preprint arXiv:2411.09861}, 2024.

\bibitem{alexeev2023quantum}
Yuri Alexeev, Maximilian Amsler, Paul Baity, Marco~Antonio Barroca, Sanzio
  Bassini, Torey Battelle, Daan Camps, David Casanova, Frederic~T Chong,
  Charles Chung, et~al.
\newblock Quantum-centric supercomputing for materials science: A perspective
  on challenges and future directions.
\newblock {\em arXiv preprint arXiv:2312.09733}, 2023.

\bibitem{Anderson1961}
P.~W. Anderson.
\newblock Localized magnetic states in metals.
\newblock {\em Phys. Rev.}, 124:41--53, Oct 1961.

\bibitem{white1992DMRG1}
Steven~R. White.
\newblock Density matrix formulation for quantum renormalization groups.
\newblock {\em Phys. Rev. Lett.}, 69:2863--2866, Nov 1992.

\bibitem{white1993DMRG}
Steven~R. White.
\newblock Density-matrix algorithms for quantum renormalization groups.
\newblock {\em Phys. Rev. B}, 48:10345--10356, Oct 1993.

\bibitem{white2005DMRG}
Steven~R. White.
\newblock Density matrix renormalization group algorithms with a single center
  site.
\newblock {\em Phys. Rev. B}, 72:180403, Nov 2005.

\bibitem{wu2022disentanglinginteractingsystemsfermionic}
Ang-Kun Wu, Benedikt Kloss, Wladislaw Krinitsin, Matthew~T Fishman, JH~Pixley,
  and EM~Stoudenmire.
\newblock Disentangling interacting systems with fermionic gaussian circuits:
  Application to quantum impurity models.
\newblock {\em Physical Review B}, 111(3):035119, 2025.

\bibitem{Barzykin1998Anderson}
Victor Barzykin and Ian Affleck.
\newblock Screening cloud in the $k$-channel kondo model: Perturbative and
  large-$k$ results.
\newblock {\em Phys. Rev. B}, 57:432--448, Jan 1998.

\bibitem{Holtzner2009Anderson}
Andreas Holzner, Ian~P. McCulloch, Ulrich Schollw\"ock, Jan von Delft, and
  Fabian Heidrich-Meisner.
\newblock Kondo screening cloud in the single-impurity anderson model: A
  density matrix renormalization group study.
\newblock {\em Phys. Rev. B}, 80:205114, Nov 2009.

\bibitem{moreno2023orbitalOptims}
Javier~Robledo Moreno, Jeffrey Cohn, Dries Sels, and Mario Motta.
\newblock Enhancing the expressivity of variational neural, and
  hardware-efficient quantum states through orbital rotations.
\newblock {\em arXiv preprint arXiv:2302.11588}, 2023.

\bibitem{roos1980complete}
Bj{\"o}rn~O Roos, Peter~R Taylor, and Per~EM Sigbahn.
\newblock A complete active space scf method (casscf) using a density matrix
  formulated super-ci approach.
\newblock 48(2):157--173, 1980.

\bibitem{head1988optimization}
Martin Head-Gordon and John~A Pople.
\newblock Optimization of wave function and geometry in the finite basis
  hartree-fock method.
\newblock 92(11):3063--3069, 1988.

\bibitem{werner1985second}
Hans-Joachim Werner and Peter~J Knowles.
\newblock A second order multiconfiguration scf procedure with optimum
  convergence.
\newblock 82(11):5053--5063, 1985.

\bibitem{olsen2011casscf}
Jeppe Olsen.
\newblock The casscf method: A perspective and commentary.
\newblock 111(13):3267--3272, 2011.

\bibitem{malmqvist1990restricted}
Per~{\AA}ke Malmqvist, Alistair Rendell, and Bj{\"o}rn~O Roos.
\newblock The restricted active space self-consistent-field method, implemented
  with a split graph unitary group approach.
\newblock 94(14):5477--5482, 1990.

\bibitem{zgid_density_2008}
Dominika Zgid and Marcel Nooijen.
\newblock The density matrix renormalization group self-consistent field
  method: {Orbital} optimization with the density matrix renormalization group
  method in the active space.
\newblock 128(14):144116, April 2008.

\bibitem{ghosh_orbital_2008}
Debashree Ghosh, Johannes Hachmann, Takeshi Yanai, and Garnet Kin-Lic Chan.
\newblock Orbital optimization in the density matrix renormalization group,
  with applications to polyenes and $\beta$-carotene.
\newblock 128(14):144117, April 2008.

\bibitem{wouters_density_2014}
Sebastian Wouters and Dimitri Van~Neck.
\newblock The density matrix renormalization group for ab initio quantum
  chemistry.
\newblock 68(9):272, September 2014.

\bibitem{ma_second-order_2017}
Yingjin Ma, Stefan Knecht, Sebastian Keller, and Markus Reiher.
\newblock Second-{Order} {Self}-{Consistent}-{Field} {Density}-{Matrix}
  {Renormalization} {Group}.
\newblock 13(6):2533--2549, June 2017.

\bibitem{JordanWigner1928}
P.~Jordan and E.~Wigner.
\newblock {\"U}ber das paulische {\"a}quivalenzverbot.
\newblock {\em Zeit. Phys}, 47(9):631--651, Sep 1928.

\bibitem{varbench}
Dian Wu, Riccardo Rossi, Filippo Vicentini, Nikita Astrakhantsev, Federico
  Becca, Xiaodong Cao, Juan Carrasquilla, Francesco Ferrari, Antoine Georges,
  Mohamed Hibat-Allah, Masatoshi Imada, Andreas~M. Läuchli, Guglielmo Mazzola,
  Antonio Mezzacapo, Andrew Millis, Javier~Robledo Moreno, Titus Neupert,
  Yusuke Nomura, Jannes Nys, Olivier Parcollet, Rico Pohle, Imelda Romero,
  Michael Schmid, J.~Maxwell Silvester, Sandro Sorella, Luca~F. Tocchio, Lei
  Wang, Steven~R. White, Alexander Wietek, Qi~Yang, Yiqi Yang, Shiwei Zhang,
  and Giuseppe Carleo.
\newblock Variational benchmarks for quantum many-body problems.
\newblock {\em Science}, 386(6719):296--301, 2024.

\bibitem{sugisaki2024SKQD}
Kenji Sugisaki, Shu Kanno, Toshinari Itoko, Rei Sakuma, and Naoki Yamamoto.
\newblock Hamiltonian simulation-based quantum-selected configuration
  interaction for large-scale electronic structure calculations with a quantum
  computer.
\newblock {\em arXiv preprint arXiv:2412.07218}, 2024.

\bibitem{mikkelsen2024SKQD}
Mathias Mikkelsen and Yuya~O Nakagawa.
\newblock Quantum-selected configuration interaction with time-evolved state.
\newblock {\em arXiv preprint arXiv:2412.13839}, 2024.

\bibitem{ffsim}
{The ffsim developers}.
\newblock {ffsim: Faster simulations of fermionic quantum circuits.}

\bibitem{sqd_addon}
Caleb Johnson, Stefano Barison, Bryce Fuller, James~R. Garrison, Jennifer~R.
  Glick, Abdullah~Ash Saki, Antonio Mezzacapo, Javier Robledo-Moreno, Max
  Rossmannek, Paul Schweigert, Iskandar Sitdikov, and Kevin~J. Sung.
\newblock {Qiskit addon: sample-based quantum diagonalization}.
\newblock \url{https://github.com/Qiskit/qiskit-addon-sqd}, 2024.

\bibitem{qiskit2024}
Ali Javadi-Abhari, Matthew Treinish, Kevin Krsulich, Christopher~J. Wood, Jake
  Lishman, Julien Gacon, Simon Martiel, Paul~D. Nation, Lev~S. Bishop,
  Andrew~W. Cross, Blake~R. Johnson, and Jay~M. Gambetta.
\newblock Quantum computing with {Q}iskit, 2024.

\bibitem{block2}
Huanchen Zhai, Henrik~R. Larsson, Seunghoon Lee, Zhi-Hao Cui, Tianyu Zhu, Chong
  Sun, Linqing Peng, Ruojing Peng, Ke~Liao, Johannes Tölle, Junjie Yang,
  Shuoxue Li, and Garnet Kin-Lic Chan.
\newblock Block2: A comprehensive open source framework to develop and apply
  state-of-the-art dmrg algorithms in electronic structure and beyond.
\newblock {\em The Journal of Chemical Physics}, 159(23):234801, 12 2023.

\bibitem{sun2018pyscf}
Qiming Sun, Timothy~C Berkelbach, Nick~S Blunt, George~H Booth, Sheng Guo,
  Zhendong Li, Junzi Liu, James~D McClain, Elvira~R Sayfutyarova, Sandeep
  Sharma, et~al.
\newblock {PySCF: the Python-based simulations of chemistry framework}.
\newblock {\em WIREs Comput. Mol. Sci}, 8(1):e1340, 2018.

\bibitem{sun2020recent}
Qiming Sun, Xing Zhang, Samragni Banerjee, Peng Bao, Marc Barbry, Nick~S Blunt,
  Nikolay~A Bogdanov, George~H Booth, Jia Chen, Zhi-Hao Cui, et~al.
\newblock Recent developments in the {PySCF} program package.
\newblock {\em J. Chem. Phys}, 153(2):024109, 2020.

\bibitem{kirby2024analysis}
William Kirby.
\newblock Analysis of quantum {K}rylov algorithms with errors.
\newblock {\em {Quantum}}, 8:1457, August 2024.

\bibitem{bravyi2024classical}
Sergey Bravyi, David Gosset, and Yinchen Liu.
\newblock Classical simulation of peaked shallow quantum circuits.
\newblock In {\em Proceedings of the 56th Annual ACM Symposium on Theory of
  Computing}, pages 561--572, 2024.

\bibitem{aaronson2024verifiable}
Scott Aaronson and Yuxuan Zhang.
\newblock On verifiable quantum advantage with peaked circuit sampling.
\newblock {\em {arXiv preprint, arXiv:2404.14493}}, 2024.

\bibitem{Borda2007Anderson}
L\'aszl\'o Borda.
\newblock Kondo screening cloud in a one-dimensional wire: Numerical
  renormalization group study.
\newblock {\em Phys. Rev. B}, 75:041307, Jan 2007.

\bibitem{zhai2023block2}
Huanchen Zhai, Henrik~R Larsson, Seunghoon Lee, Zhi-Hao Cui, Tianyu Zhu, Chong
  Sun, Linqing Peng, Ruojing Peng, Ke~Liao, Johannes T{\"o}lle, et~al.
\newblock Block2: A comprehensive open source framework to develop and apply
  state-of-the-art dmrg algorithms in electronic structure and beyond.
\newblock {\em The Journal of Chemical Physics}, 159(23), 2023.

\bibitem{sharma2017semistochastic}
Sandeep Sharma, Adam~A Holmes, Guillaume Jeanmairet, Ali Alavi, and Cyrus~J
  Umrigar.
\newblock Semistochastic heat-bath configuration interaction method: Selected
  configuration interaction with semistochastic perturbation theory.
\newblock {\em Journal of chemical theory and computation}, 13(4):1595--1604,
  2017.

\bibitem{wecker_solving_2015}
Dave Wecker, Matthew~B. Hastings, Nathan Wiebe, Bryan~K. Clark, Chetan Nayak,
  and Matthias Troyer.
\newblock Solving strongly correlated electron models on a quantum computer.
\newblock {\em Phys. Rev. A}, 92:062318, Dec 2015.

\bibitem{Kivlichan2018GivensSlater}
Ian~D. Kivlichan, Jarrod McClean, Nathan Wiebe, Craig Gidney, Al\'an
  Aspuru-Guzik, Garnet Kin-Lic Chan, and Ryan Babbush.
\newblock Quantum simulation of electronic structure with linear depth and
  connectivity.
\newblock {\em Phys. Rev. Lett.}, 120:110501, Mar 2018.

\bibitem{jiang_correlated_2018}
Zhang Jiang, Kevin~J. Sung, Kostyantyn Kechedzhi, Vadim~N. Smelyanskiy, and
  Sergio Boixo.
\newblock Quantum algorithms to simulate many-body physics of correlated
  fermions.
\newblock {\em Phys. Rev. Applied}, 9:044036, Apr 2018.

\bibitem{kingma2014Adam}
Diederik~P Kingma.
\newblock Adam: A method for stochastic optimization.
\newblock {\em arXiv preprint arXiv:1412.6980}, 2014.

\bibitem{jax2018github}
James Bradbury, Roy Frostig, Peter Hawkins, Matthew~James Johnson, Chris Leary,
  Dougal Maclaurin, George Necula, Adam Paszke, Jake Vander{P}las, Skye
  Wanderman-{M}ilne, and Qiao Zhang.
\newblock {JAX}: composable transformations of {P}ython+{N}um{P}y programs,
  2018.

\bibitem{pfeuty1970the-one-dimensi}
Pierre Pfeuty.
\newblock The one-dimensional {I}sing model with a transverse field.
\newblock {\em Annals of Physics}, 57(1):79--90, 1970.

\end{thebibliography}

\end{document}